\keywords{Combinatory logic, B combinator, Lambda calculus}
\tikzset{
  solid node/.style={circle, draw, inner sep=1.0, fill=black},
}
\newcommand\ie{i.e.}
\DeclareRobustCommand\dcolonequals{\mathbin{::=}}
\DeclareRobustCommand\dplus{\mathbin{+\!\!\!+}}
\newcommand\labeqn[1]{\label{eqn:#1}}
\newcommand\refeqn[1]{(\ref{eqn:#1})}
\newcommand\labfig[1]{\label{fig:#1}}
\newcommand\reffig[1]{\textrm{Fig.\,\ref{fig:#1}}}
\newcommand\Reffig[1]{\textrm{Figure~\ref{fig:#1}}}
\newcommand\labtbl[1]{\label{tbl:#1}}
\newcommand\Reftbl[1]{\textrm{Table~\ref{tbl:#1}}}
\newcommand\labthr[1]{\label{thr:#1}}
\newcommand\refthr[1]{\textrm{Theorem~\ref{thr:#1}}}
\newcommand\lablem[1]{\label{lem:#1}}
\newcommand\reflem[1]{\textrm{Lemma~\ref{lem:#1}}}
\newcommand\labcnj[1]{\label{cnj:#1}}
\newcommand\refcnj[1]{\textrm{Conjecture~\ref{cnj:#1}}}
\newcommand\labsec[1]{\label{sec:#1}}
\newcommand\refsec[1]{\textrm{Section~\ref{sec:#1}}}
\newcommand\labalg[1]{\label{alg:#1}}
\newcommand\refalg[1]{\textrm{Algorithm~\ref{alg:#1}}}
\newcommand\paren[1]{\left(#1\right)}
\newcommand\sapp[2]{#1_{\paren{#2}}}
\newcommand\UL[1]{\underline{#1}}
\newcommand\WL[1]{\underline{\underline{#1}}}
\newcommand\CL[1]{\mathbf{CL}(#1)}
\newcommand\nodes{\mathcal{L}}
\newcommand\x{\star}
\def\<#1>{\langle #1\rangle}
\newcommand\poly[1]{%
\langle\dots\langle\langle\x,\underbrace{\x\rangle,\x\rangle,\dots,\x\rangle}_{#1}%
}
\newcommand\size[1]{\left|\!\left|#1\right|\!\right|}
\newcommand\concat{\mathbin{{+}\mspace{-8mu}{+}}}
\newcommand\Proof{\textit{Proof}}
\newcommand\lampat[2]{#1#2}
\newcommand\Lam[2]{\lambda\Lam@main{\lampat{#1}{#2}}}
\newcommand\Lam@main[1]{\@ifnextchar\bgroup{\Lam@more{#1}}{{#1}}}
\newcommand\Lam@more[2]{\Lam@main{\lampat{#1}{#2}}}
\newcommand\dotsLam[1]{\lampat{\dots}{#1}}
\newcommand\dotLam[1]{\lampat{}{#1}}
\newcommand\ctilde{\text{\raisebox{.6ex}{\texttildelow}}}
\theoremstyle{defC}\newtheorem{algoC}[thm]{Algorithm} 
\theoremstyle{plain}\newtheorem{claim}[thm]{Claim}
\theoremstyle{plain} 
\def\eg{{\em e.g.}}
\begin{document}

\title{On properties of $B$-terms}

\author[M.~Ikebuchi]{Mirai Ikebuchi}	
\address{Massachusetts Institute of Technology, Cambridge, MA, USA}	
\email{ikebuchi@mit.edu}

\author[K.~Nakano]{Keisuke Nakano}	
\address{Tohoku University, Sendai, Miyagi, Japan}	
\email{k.nakano@acm.org}  

%





\begin{abstract}
\(B\)-terms are built from the \(B\) combinator alone
defined by \(B\equiv\Lam{f}{g}{x}. f(g~x)\), which is
well known as a function composition operator.
This paper investigates an interesting property of \(B\)-terms,
that is, whether repetitive right applications of a \(B\)-term cycles or not.
We discuss conditions for \(B\)-terms to have and not to have the property
through
a sound and complete equational axiomatization.
Specifically, 
we give examples of \(B\)-terms which have the cyclic property 
and show that there are infinitely many \(B\)-terms which do not have the property.
Also, we introduce another interesting property 
about a canonical representation of \(B\)-terms
that is useful to detect cycles, or equivalently, to prove the cyclic property, 
with an efficient algorithm.
\end{abstract}


\maketitle

\section*{Introduction}
\labsec{intro}
The `bluebird' combinator \(B=\Lam{f}{g}{x}.~f(g~x)$
is well known~\cite{Schoenfinkel24ma, Curry30ajm1,Smullyan12book}
as a bracketing combinator or composition operator,
which has a principal type 
$(\alpha\to\beta)\to(\gamma\to\alpha)\to\gamma\to\beta$.
A function $B~f~g$ (also written as $f\circ g$)
takes a single argument $x$ 
and returns the term $f(g~x)$.
In the general case that \(g\) takes \(n\) arguments,
the composition can be given by
\(\lambda{x_1}\dotsLam{x_n}.\, f(g~x_1~\dots~x_n)\).
We call it the \emph{\(n\)-argument composition} of \(f\) and \(g\).
Interestingly, the function can be given as $B^n\, f~ g$
where $e^n$ stands for the $n$-fold composition $\underbrace{e \circ \dots \circ e}_n$ of the function $e$,
or equivalently defined by \(e^n\, x = \underbrace{e~(\dots(e}_n~x))\).
This fact can be shown by an easy induction.

Now we consider the $2$-argument composition expressed as 
\(B^2=\Lam{f}{g}{x}{y}.\,f(g~x~y)\).
From the definition, we have $B^2 = B\circ B = B~B~B$.
Note that function application
is considered left-associative, that is, $f~a~b=(f~a)~b$.
Thus $B^2$ is expressed as a term
in which all applications nest to the left, never to the right.
We call such terms {\em flat\/}~\cite{Okasaki03jfp}.
We write $\sapp{X}{k}$
for the flat term defined by $\underbrace{X~X~X~\dots~ X}_k
= \underbrace{(\dots((X~X)~X)\dots)~ X}_k$
(that can be written as \(I\,X^{\ctilde k}\) in Barendregt's notation~\cite{Barendregt84book}).
Using this notation, we can write $B^2 = \sapp{B}{3}$.

Okasaki~\cite{Okasaki03jfp} investigated facts about flatness.
For example, he shows that there is no universal combinator $X$ that can represent
any combinator by $\sapp{X}{k}$ with some $k$.
We shall delve into the case of $X=B$.
Consider the $n$-argument composition operator $B^n$.
We have already seen that $B^2$ is $\beta\eta$-equivalent to the flat term $\sapp{B}{3}$.
For $n=3$,
using $\UL{f}~(\WL{g}~x) = B~f~g~x$, we have
\begin{align*}
B^3
&= B~B^2~B \\
&= \UL{B}~(\WL{B~B}~B)~B \\
&= \UL{B~B}~(\WL{B}~B)~B~B \\
&= \UL{B}~(\WL{B}~B)~B~B~B~B \\
&= B~B~B~B~B~B~B~B\text,
\end{align*}
and thus $B^3=\sapp{B}{8}$.
%
How about the $4$-argument composition $B^4$?
In fact, there is no integer $k$ such that
$B^4 = \sapp{B}{k}$
with respect to $\beta\eta$-equality.
Moreover, for any $n > 3$, there does not exist $k$
such that $B^n = \sapp{B}{k}$.
This surprising fact is proved by a quite simple method;
listing all $\sapp{B}{k}$s for $k=1,2,\dots$
and checking that none of them is equivalent to $B^n$.
An easy computation gives
$\sapp{B}{6}=\sapp{B}{10}=\Lam{x}{y}{z}{w}{v}.~x~(y~z)~(w~v)$,
and hence $\sapp{B}{i}=\sapp{B}{i+4}$ for every $i\geq6$.
%
Then, by computing $\sapp{B}{k}$s only for $k\in\{1,2,\dots,6\}$,
we can check that $\sapp{B}{k}$ is not $\beta\eta$-equivalent to $B^n$ with $n>3$ for $k\in\{1,2,\dots\}$.
Thus we conclude that there is no integer $k$ such that $B^n =\sapp{B}{k}$.

This is the starting point of our research.
We say that a combinator $X$ has the {\em $\rho$-property\/}
if there exist two distinct integers $i$ and $j$ such that
$\sapp{X}{i} = \sapp{X}{j}$.
In this case, we have $\sapp{X}{i+k}=\sapp{X}{j+k}$
for any $k\geq 0$
~({\`a}~la \emph{finite monogenic semigroup}~\cite{Ljapin68book}).
\reffig{rhoB} shows a computation graph of $\sapp{B}{k}$.
The $\rho$-property is named 
after the shape of the graph.
\begin{figure*}[t]\centering
\newcount\picW \newcount\picH \picW=420 \picH=60
\begin{picture}(\picW,\picH)(0,0)
\divide\picW 10 \advance\picW 1 \divide\picH 10 \advance\picH 1 
\def\Line(#1,#2)(#3,#4){
\newcount\middleX \middleX=#1 \advance\middleX #3 \divide\middleX 2
\newcount\middleY \middleY=#2 \advance\middleY #4 \divide\middleY 2
\qbezier(#1,#2)(\middleX,\middleY)(#3,#4)
}
\def\startX{0} \def\upperY{50}
\def\marginB{10} \def\spanBs{10} \def\spanBh{120} \def\spanBv{20}
\newcount\tmpX \newcount\tmpY
\def\putB[#1]{\put(\tmpX,\tmpY){\makebox(20,0){$\sapp{B}{#1}$}} \advance\tmpX 20}
\def\putBs[#1]{\newcount\tmpXs \tmpXs=\tmpX \newcount\tmpYs \tmpYs=\tmpY
\advance\tmpXs-2 \advance\tmpYs-12
\put(\tmpXs,\tmpYs){\makebox(100,0){$(#1)$}}}
\def\lineB(#1){
\newcount\nextX \nextX=\tmpX \advance\nextX #1
\Line(\tmpX,\tmpY)(\nextX,\tmpY) \tmpX=\nextX
}
\def\lineBs{\lineB(\spanBs)} \def\lineBh{\lineB(\spanBh)}
\def\lineBv{
\newcount\nextY \nextY=\tmpY \advance\nextY -\spanBv
\Line(\tmpX,\tmpY)(\tmpX,\nextY) \tmpY=\nextY
}
\def\turnR{\advance\tmpX-12 \advance\tmpY-8}
\def\turnL{\advance\tmpX-8 \advance\tmpY-8}
\tmpX=\startX \tmpY=\upperY
\putB[1] \lineBs \putB[2] \lineBs \putB[3] \lineBs \putB[4] \lineBs \putB[5] \lineBs
\putB[6] \putBs[=\sapp{B}{10}=\sapp{B}{14}=\dots]
\newcount\sixX \sixX=\tmpX \newcount\sixY \sixY=\tmpY \lineBh
\putB[7] \putBs[=\sapp{B}{11}=\sapp{B}{15}=\dots] \turnR \lineBv
\tmpX=\sixX \tmpY=\sixY \turnR \lineBv
\turnL \putB[9] \putBs[=\sapp{B}{13}=\sapp{B}{17}=\dots]
\lineBh \putB[8] \putBs[=\sapp{B}{12}=\sapp{B}{16}=\dots]
\end{picture}
\caption{$\rho$-property of the $B$ combinator}
\labfig{rhoB}
\end{figure*}

This paper discusses the $\rho$-property of combinatory terms,
particularly terms built from $B$ alone.
We call such terms {\em $B$-terms} and $\CL{B}$ denotes the set of all $B$-terms.
For example,
the $B$-term $B~ B$ enjoys the $\rho$-property 
with $\sapp{(B~ B)}{52}=\sapp{(B~ B)}{32}$
and so does $B~ (B~ B)$ with
$\sapp{(B~ (B~ B))}{294}=\sapp{(B~ (B~ B))}{258}$
as reported in~\cite{Nakano08trs}.
Several combinators other than $B$-terms
can be found to enjoy the $\rho$-property,
for example,
$K=\Lam{x}{y}.~ x$ and 
$C=\Lam{x}{y}{z}.~ x~ z~ y$ 
because of $\sapp{K}{3}=\sapp{K}{1}$ and 
$\sapp{C}{4}=\sapp{C}{3}$.
They are less interesting
in the sense that the cycle starts immediately and its size is very small,
comparing with $B$-terms like $B~ B$ and $B~ (B~ B)$.
As we will see later,
$B~ (B~ (B~ (B~ (B~ (B~ B))))) (\equiv B^6~ B)$ has the $\rho$-property
with the cycle of size more than $3\times 10^{11}$
which starts after more than $2\times 10^{12}$ repetitive right applications.
This is why the $\rho$-property of $B$-terms is intensively discussed
in the present paper.
A general definition of the $\rho$-property is presented in~\refsec{prelim}.
%

The contributions of the paper are two-fold.
One is to give a characterization of $\CL{B}$ (\refsec{checkeq}) and 
another is to provide a sufficient condition for 
the $\rho$-property and anti-$\rho$-property of $B$-terms
(\refsec{results}). 
In the former, we introduce a canonical representation of $B$-terms
and establish a sound and complete equational axiomatization for $\CL{B}$.
In the latter, 
the $\rho$-property of $B^n B$ with $n\leq 6$ is shown with an efficient algorithm
and 
the anti-$\rho$-property for $B$-terms of particular forms
is proved.

This paper extends and refines our paper presented in FSCD 2018~\cite{Ikebuchi18fscd}.
Compared to our previous work, we have made several improvements.
First, we add relationships to the existing work,
the Curry's compositive normal form and the Thompson's group.
Second, we report progress on proving and disproving the $\rho$-property of $B$-terms.
For proving the $\rho$-property,
we add more precise information on the implementation of our $\rho$-property checker.
For disproving the $\rho$-property,
we introduce another proof method for a specific $B$-term
and expand the set of $B$-terms which are known not to have the $\rho$-property.
Furthermore,
we discuss other possible approaches for further steps 
to show a conjecture by the second author~\cite{Nakano08trs}.


%
%



\section{The $\rho$-property of terms}
\labsec{prelim}
The $\rho$-property of a combinator $X$ is that
$\sapp{X}{i}=\sapp{X}{i+j}$ holds for some $i,j\geq1$.
We adopt $\beta\eta$-equality of corresponding $\lambda$-terms
for the equality of combinatory terms in this paper.
We could use another equality, for example,
induced by the axioms of combinatory logic.
The choice of equality is not essential here,
e.g., $\sapp{B}{9}$ and $\sapp{B}{13}$ are equal even up to the combinatory axiom of $B$,
as well as $\beta\eta$-equality.
(See \refsec{possapp} for more details.)
Furthermore, for simplicity,
we only deal with the case where $\sapp{X}{n}$
is normalizable for all $n$.
If $\sapp{X}{n}$ is not normalizable, 
it is much more difficult to check equivalence with the other terms.
This restriction does not affect the results of the paper
because all $B$-terms are normalizing.

Let us write $\rho(X)=(i,j)$ 
if a combinator $X$ has the $\rho$-property  
due to $\sapp{X}{i}=\sapp{X}{i+j}$
with minimum positive integers $i$ and $j$.
For example, we have $\rho(B)=(6,4)$,
$\rho(C)=(3,1)$, $\rho(K)=(1,2)$ and $\rho(I)=(1,1)$.
Besides them,
several combinators introduced in Smullyan's book~\cite{Smullyan12book}
have the $\rho$-property:
\begin{align*}
\rho(D)&=(32,20) 
&&\text{where $D=\Lam{x}{y}{z}{w}.\, x~ y~ (z~ w)$}\\
\rho(F)&=(3,1) &&\text{where $F=\Lam{x}{y}{z}.\, z~ y~ x$}\\
\rho(R)&=(3,1) &&\text{where $R=\Lam{x}{y}{z}.\, y~ z~ x$}\\
\rho(T)&=(2,1) &&\text{where $T=\Lam{x}{y}.\, y~ x$}\\
\rho(V)&=(3,1) &&\text{where $V=\Lam{x}{y}{z}.\, z~ x~ y$}\text.
\end{align*}
Except for the $B$ and $D~(=B~ B)$ combinators,
the property is `trivial' in the sense that
the loop starts early and the size of the cycle is very small.

On the other hand,
the combinators $S=\Lam{x}{y}{z}.\, x~z~(y~z)$
and $O=\Lam{x}{y}.\, y~(x~ y)$ in the book
do not have the $\rho$-property 
since their right application expands the \(\lambda\)-terms as illustrated by
\begin{align*}
\sapp{S}{2n+1} &=
\Lam{x}{y}.\, \underbrace{x~ y~ (x~ y~ (\dots (x~ y}_n~ 
(\lambda z. x~ z~ (y~ z)))\dots))
\text,
\\
\sapp{O}{n+1} &=
\lambda x. \underbrace{x~ (x~ (\dots (x~ }_n~ 
(\lambda y. y~ (x~ y))
\text.
\end{align*}
%

%
The definition of the $\rho$-property is naturally extended
from single combinators to terms obtained by combining several combinators.
We found by computation that
several $B$-terms
have the $\rho$-property
as shown below.
%
\begin{align*}
\rho(B^0 B)&=(6,4)  
&\rho(B^4 B)&=(191206,431453)\\ 
\rho(B^1 B)&=(32,20) 
&\rho(B^5 B)&=(766241307,234444571)\\ 
\rho(B^2 B)&=(258,36) 
&\rho(B^6 B)&=(2641033883877,339020201163)\\ 
\rho(B^3 B)&=(4240,5796) 
\end{align*}
%
The details will be shown in~\refsec{rhob6b}.

%
From his observation on repetitive right applications 
for several \(B\)-terms,
Nakano~\cite{Nakano08trs} has conjectured as follows.
\begin{conj}
\labcnj{B-conj}
A $B$-term $e$ has the $\rho$-property if and only if 
$e$ is a monomial, \ie, $e$ is equivalent to $B^n B$ with $n\geq0$.
\end{conj}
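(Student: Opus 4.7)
The plan is to prove the two directions separately, exploiting the canonical representation of $B$-terms developed in \refsec{checkeq}.

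For the \emph{if} direction — every monomial $B^n B$ has the $\rho$-property — I would first compute, inductively on $k$, the canonical form of $\sapp{(B^n B)}{k}$ obtained by $\beta\eta$-normalizing after each right multiplication. I expect that for fixed $n$ the canonical form can be encoded by a finite tuple of integers capturing, say, the counts of $B$-nodes along certain spine branches. Then right multiplication acts on these tuples by a combinatorial map, and the task reduces to showing that the orbit of $B^n B$ under this map is finite, since finiteness forces cycling by the pigeonhole principle. For $n \leq 6$ this can be verified by direct simulation — the paper reports doing exactly this with an efficient algorithm — but for arbitrary $n$ a structural, $n$-uniform proof of orbit finiteness is required.

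For the \emph{only if} direction — no non-monomial $B$-term has the $\rho$-property — I would take a $B$-term $e$ whose canonical form does not match that of any $B^n B$ and construct an eventually strictly increasing measure on the canonical forms of $\sapp{e}{k}$. The intuition is that the non-monomial part of $e$ produces, under right application, a residue that cannot be absorbed by $B$-reduction, so its copies accumulate as $k$ grows, much as in the $S$ and $O$ examples of \refsec{prelim}. Concretely, I would identify a specific invariant of the canonical tree — for example, the depth of a right-nested branch lying outside the monomial spine, or the multiplicity of a distinguished subterm — and prove that it is nondecreasing in $k$ and strictly increases infinitely often. This rules out $\sapp{e}{i} = \sapp{e}{j}$ for all sufficiently large $i, j$, and hence for all $i \neq j$.

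The main obstacle will be the \emph{if} direction for general $n$: the empirical values of $\rho(B^n B)$ grow so rapidly that direct simulation is infeasible beyond $n = 6$, and the paper itself treats only these cases. Establishing orbit finiteness uniformly in $n$ seems to require a genuinely new invariant, possibly drawing on the Thompson's group connection mentioned in the introduction. A secondary difficulty, on the \emph{only if} side, is the choice of the growth measure: ordinary $\lambda$-term size is not monotone under $B$-reduction, so the measure must be defined intrinsically on the canonical form and must robustly isolate the non-monomial contribution from the parts that can be compressed by composition.
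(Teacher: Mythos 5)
This statement is a \emph{conjecture}: the paper does not prove it, and neither do you. What you have written is a research plan whose two key steps are exactly the two open problems, so the honest assessment is that there is a genuine gap in each direction, and you should be aware of how far the paper itself gets so you can see where your plan stops being a proof.

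For the ``if'' direction, your reduction to orbit finiteness of the right-application map on canonical forms is correct and is precisely the paper's setup: canonical forms are weakly decreasing integer sequences (decreasing polynomials), and right application acts on them by the explicit Algorithm~\ref{alg:appcanon}. But ``showing that the orbit of $B^n B$ under this map is finite'' for all $n$ is the entire content of the conjecture's if-part; the paper only verifies it for $n\le 6$ by running the map, and no $n$-uniform invariant bounding the lengths and degrees is known. Your proposal does not supply one. For the ``only if'' direction, the measure you gesture at must be constructed, and the paper shows this is delicate: for $e=B^2$ the working measure is the \emph{length} of the decreasing polynomial of $\sapp{e}{k}$, which \reflem{b2general} shows grows without bound along the subsequence $k=t_m$; for the families $(B^kB)^{(k+2)n}$ the measure is the leaf count $l(\sapp{e}{k})$ of the normal form, but \refthr{antibgeneral} only forces it to diverge under an extra hypothesis --- the existence of a set $T$ closed under right application of $e$ with $l(e)-a(X')\ge 1$ for all $X'\in T$ --- and such a $T$ is exhibited ad hoc for each family. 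Your claim that ``copies accumulate'' for an arbitrary non-monomial is exactly what fails to be proved in general: nothing in the canonical-form calculus guarantees that the length or any other proposed measure is nondecreasing for an arbitrary non-monomial $e$, and the paper leaves all non-monomials outside its three listed families unresolved. So both halves of your argument terminate at the same frontier as the paper, and the conjecture remains open.
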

\noindent
The definition of \emph{monomial} will be given in~\refsec{canonical}
in the context of a canonical representation of \(B\)-terms.
The ``if'' part of the conjecture for \(n\leq 6\) is shown by the above results;
the ``only if'' part will be shown for specific \(B\) terms
which will be discussed~\refsec{antib2n}. 
Note that the \(\rho\)-property of \(X\) 
can be rephrased in terms of the set generated by right application,
that is,
the finiteness of the set \(\{ \mathit{nf}(\sapp{X}{n}) \mid n\ge1 \}\)
where \(\mathit{nf}(e)\) represents the normal form of \(e\).
\refcnj{B-conj} claims that
for any \(B\)-term \(e\), 
the finiteness of the set \(\{\mathit{nf}(\sapp{e}{n})\mid n\geq1\}\) is decidable
since so is the word problem of \(B\)-terms.


\section{Checking equivalence of $B$-terms}
\labsec{checkeq}
The set of all $B$-terms, $\CL{B}$,
is closed under application by definition,
that is, the repetitive right application of a $B$-term
always generates a sequence of $B$-terms.
Hence, the $\rho$-property can be decided
by checking `equivalence' among generated $B$-terms,
where the equivalence should be checked
through $\beta\eta$-equivalence of their corresponding $\lambda$-terms
in accordance with the definition of the $\rho$-property.
It would be useful if we have a fast algorithm
for deciding equivalence over $B$-terms.

In this section,
we give a characterization of the $B$-terms
to efficiently decide their equivalence.
We introduce a method for deciding equivalence of $B$-terms
without calculating the corresponding $\lambda$-terms.
To this end, 
we first investigate equivalence over $B$-terms
with examples
and then present an equation system as a characterization of $B$-terms
so as to decide equivalence between two $B$-terms.
Based on the equation system, 
we introduce a canonical representation of $B$-terms.
The representation makes it easy to observe 
the growth caused by repetitive right application of $B$-terms,
which will be later used for proving 
the anti-$\rho$-property of $B^{2}$.
We believe that this representation will be helpful to prove 
the $\rho$-property or the anti-$\rho$-property 
for the other $B$-terms.

\subsection{Equivalence over $B$-terms}
Two $B$-terms are said to be \emph{equivalent} if their corresponding $\lambda$-terms
are $\beta\eta$-equivalent.
For instance, $B~ B~ (B~ B)$ and $B~ (B~ B)~ B~ B$ are equivalent.
This can be shown
by the definition $B~x~y~z = x~ (y~ z)$.
For another (non-trivial) instance,
$B~ B~ (B~ B)$ and
$B~ (B~ (B~ B))~ B$ are 
equivalent.
This is illustrated
by the fact that
they are equivalent to 
$\Lam{x}{y}{z}{w}{v}.\, x~(y~z)~(w~v)$
%
where $B$ is replaced with $\Lam{x}{y}{z}.\, x~(y~z)$
or the other way around
at the $=_\beta$ equation.
Similarly,
it is hard to directly show equivalence between the two $B$-terms,
\(B~(B~B)~(B~B)\) and \(B~(B~B~B)\),
which requires long calculation like:
\begin{align*}
B~ B~ (B~ B)
&=_\eta \lambda x. B~ B~ (B~ B)~ x
\\&
=_\beta \lambda x. B~ (B~ B~ x)
\\&
=_\eta  \Lam{x}{y}. B~ (B~ B~ x)~ y
\\&
=_\eta  \Lam{x}{y}{z}. B~ (B~ B~ x)~ y~ z
\\&
=_\beta \Lam{x}{y}{z}. B~ B~ x~ (y~ z)
\\&
=_\beta \Lam{x}{y}{z}. B~ (x~ (y~ z))
\\&
=_\beta \Lam{x}{y}{z}. B~ (B~ x~ y~ z)
\\&
=_\beta \Lam{x}{y}{z}. B~ B~ (B~ x~ y)~ z
\\&
=_\eta  \Lam{x}{y}. B~ B~ (B~ x~ y)
\\&
=_\beta \Lam{x}{y}. B~ (B~ B)~ (B~ x)~ y
\\&
=_\eta  \lambda x. B~ (B~ B)~ (B~ x)
\\&
=_\beta \lambda x. B~ (B~ (B~ B))~ B~ x
\\&
=_\eta  B~ (B~ (B~ B))~ B\mbox.
\end{align*}
This kind of equality makes it hard
to investigate the $\rho$-property of $B$-terms.
To solve this annoying issue,
we will introduce a canonical representation of $B$-terms in \refsec{canonical}.

\subsection{Equational axiomatization for $B$-terms}
Equality between two $B$-terms can be decided through
their canonical representation introduced in \refsec{canonical}.
The representation is based on
a sound and complete equation system
as described in
the next theorem.

\begin{thm}
\labthr{eqB}
Two $B$-terms are $\beta\eta$-equivalent
if and only if
their equality is derived from the following equations:
\begin{align}
B~ x~ y~ z &= x~ (y~ z)
\tag{B1}
\\
B~ (B~ x~ y) &= B~ (B~ x)~ (B~ y)
\tag{B2}
\\
B~ B~ (B~ x) &= B~ (B~ (B~ x))~ B
\tag{B3}
\end{align}
%
\end{thm}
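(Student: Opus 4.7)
The plan is to split the claim into soundness (if-part) and completeness (only-if part). Soundness is routine: substituting $\lambda f g x.\,f(g\,x)$ for $B$ in each of (B1), (B2), (B3), one checks by a few $\beta\eta$-steps that both sides $\beta\eta$-normalize to the same $\lambda$-term. For (B1) this is immediate; for (B2) and (B3) it suffices to $\eta$-expand both sides to their full arity (four and five, respectively) and observe that the fully-applied normal forms coincide. These calculations are of the same flavor as, but considerably shorter than, the long derivation already displayed earlier in the excerpt.

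For completeness my strategy is to turn (B2) and (B3) into a left-to-right rewriting system and show that every $B$-term has a unique normal form under it, matching the canonical representation introduced in the subsequent subsection. Since soundness already guarantees that canonical forms are $\beta\eta$-equal to the $B$-terms that reduce to them, completeness reduces to the statement that two $B$-terms with distinct canonical forms are $\beta\eta$-inequivalent. The invariant I would exploit for this is semantic: any $B$-term of arity $n$ has a $\beta\eta$-normal form $\lambda x_1\ldots x_n.\,T$ for some purely applicative tree $T$ over the variables $x_i$, and two $B$-terms are $\beta\eta$-equivalent exactly when they share this pair $(n,T)$. Injectivity of the map from canonical forms to $(n,T)$ therefore yields completeness.

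The main obstacle is verifying that (B2) and (B3) really are a complete set of rewrites. The excerpt itself hints at the subtlety through the long chain connecting $B\,B\,(B\,B)$ and $B\,(B\,(B\,B))\,B$: seemingly unrelated $B$-terms can turn out to be equivalent, and the required transformations may be quite non-local. I would handle this by strong induction on the size of the underlying tree $T$, decomposing $T$ at its root and arguing that (B2) lets us push the $B$-term into a shape that exposes the top-level application, while (B3) performs the rotation needed to re-associate nested occurrences of $B$. Termination and local confluence of the oriented system will then have to be checked separately, using a well-founded measure on $B$-terms together with an analysis of critical pairs between (B2) and (B3); this, I expect, is where most of the technical work will sit.
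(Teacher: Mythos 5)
Your decomposition into soundness plus a canonical-form argument matches the paper's, and the semantic invariant you name---that a $B$-term is determined up to $\beta\eta$-equality by its arity together with the applicative tree $T$ in its normal form $\lambda x_1\dots x_n.\,T$---is exactly what the paper exploits (\reflem{bform} and the $\nodes$ labelling of binary trees). The genuine difference lies in how uniqueness of the canonical form is obtained. You propose to orient (B2) and (B3) and establish termination and local confluence via critical pairs; the paper never does this, and it is redundant given your own semantic invariant. The paper only needs \emph{existence}: \reflem{H-exist} and \reflem{H-decr} show, by structural induction and an insertion-sort-style induction on length, that every $B$-term is provably equal to \emph{some} decreasing polynomial $(B^{n_1}B)\circ\dots\circ(B^{n_k}B)$ with $n_1\ge\dots\ge n_k$ (note that (B1) is also needed here, to reduce every occurrence of $B$ to at most two arguments before (B2) and (B3) can be applied---your rewrite system as described omits it). Uniqueness then comes entirely from the semantic side: \reflem{H-uniq} computes the $\beta\eta$-normal form of each decreasing polynomial explicitly and shows the degree list $[n_1,\dots,n_k]$ is recoverable from the binary tree of that normal form, so distinct decreasing polynomials are $\beta\eta$-inequivalent, and confluence of the oriented system is a corollary rather than a prerequisite. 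I would also caution that you have the difficulty inverted: the injectivity of canonical forms into $(n,T)$, which you assert in one sentence, is the technical heart of the argument (a case-splitting induction on the length of the polynomial, tracking how right application reshapes the tree), whereas the rewriting-to-canonical-form step is comparatively routine.
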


The proof of the ``if'' part,
which corresponds to the soundness of the equation system (B1), (B2), and (B3),
is given here.
We will later prove the ``only if'' part
with the uniqueness of the canonical representation of $B$-terms.
\\
\begin{proof}
Equation~(B1) is 
immediate from the definition of $B$.
Equations~(B2) and~(B3) are shown by
\begin{align*}
B~ (B~ e_1~ e_2) &=
\Lam{x}{y}.\, B~ (B~ e_1~ e_2)~ x~ y
                                    & B~ B~ (B~ e_1) 
                                    &= \lambda x.\, B~ B~ (B~ e_1)~ x
\\&=
\Lam{x}{y}.\, B~ e_1~ e_2~ (x~ y)
                                    &&= \lambda x.\, B~ (B~ e_1~ x)
\\&=
\Lam{x}{y}.\, e_1~ (e_2~ (x~ y))
                                    &&= \Lam{x}{y}{z}.\, B~ e_1~ x (y~ z)
\\&=
\Lam{x}{y}.\, e_1~ (B~ e_2~ x~ y)
                                    &&= \Lam{x}{y}{z}.\, e_1~ (x~ (y~ z))
\\&=
\lambda x.\, B~ e_1~ (B~ e_2~ x)
                                    &&= \Lam{x}{y}{z}.\, e_1~ (B~ x~ y~ z)
\\&=
B~ (B~ e_1)~ (B~ e_2)
                                    &&= \Lam{x}{y}.\, B~ e_1~ (B~ x~ y)
\\&
                                    &&= \lambda x.\, B~ (B~ e_1)~ (B~ x)
\\&                                    
                                    &&= B~ (B~ (B~ e_1))~ B
\end{align*}
where the $\alpha$-renaming is implicitly used.
\end{proof}

Equation~(B2) has been employed by Statman~\cite{Statman10type}
to show that no $B\omega$-term can be a fixed-point combinator
where $\omega=\lambda x.\, x~x$.
This equation exposes an interesting feature of the $B$ combinator.
Write equation~(B2) as
\begin{align}
B~ (e_1 \circ e_2) = (B~ e_1) \circ (B~ e_2)
\tag{B2'}
\labeqn{Bo-distr}
\end{align}
by replacing every $B$ combinator with $\circ$ infix operator
if it has exactly two arguments.
The equation is a distributive law of $B$ over $\circ$,
which will be used to obtain the canonical representation of $B$-terms.
Equation~(B3) is also used for the same purpose
as the form of
\begin{align}
B \circ (B~ e_1) = (B~ (B~ e_1)) \circ B\text.
\tag{B3'}
\labeqn{Bo-push}
\end{align}

We also have a natural equation
$B~ e_1~ (B~ e_2~ e_3) = B~ (B~ e_1~ e_2)~ e_3$
which represents associativity of function composition,
i.e.,~$e_1\circ (e_2\circ e_3) = (e_1\circ e_2)\circ e_3$.
This is shown with equations~(B1) and (B2)
by
\begin{equation*}
B~ e_1~ (B~ e_2~ e_3) = B~ (B~ e_1)~ (B~ e_2)~ e_3 = B~ (B~ e_1~ e_2)~ e_3
.
\end{equation*}

\subsection{Canonical representation of $B$-terms}
\labsec{canonical}
To decide equality between two $B$-terms,
it does not suffice to compute their normal forms
under the definition of $B$, $B~x~y~z\to x~(y~z)$.
This is because
two distinct normal forms may be equal up to $\beta\eta$-equivalence,
e.g., $B~B~(B~B)$ and $B~(B~(B~B))~B$.
We introduce a canonical representation of $B$-terms,
which makes it easy to check equivalence of $B$-terms.
We will eventually find that for any $B$-term $e$
there exists a unique finite non-empty weakly-decreasing sequence
of non-negative integers
$n_1\geq n_2\geq\dots\geq n_k$ 
such that $e$ is 
equivalent
to $(B^{n_1} B) \circ (B^{n_2} B)\circ \dots \circ (B^{n_k} B)$.
Ignoring the inequality condition gives 
\emph{polynomials} introduced by Statman~\cite{Statman10type}.
We will use these \emph{decreasing polynomials} for our canonical representation
as presented later. 
A similar result is found in~\cite{Curry30ajm2} as discussed later.

First, we explain how this canonical form is obtained from a $B$-term.
We only need to consider $B$-terms 
in which every $B$ has at most two arguments.
One can reduce the arguments of $B$ to less than three
by repeatedly rewriting occurrences of $B~e_1~e_2~e_3~e_4~\dots~e_n$ into
$e_1~(e_2~e_3)~e_4~\dots~e_n$.
The rewriting procedure always terminates because it reduces the number of $B$.
Thus, every $B$-term in $\CL{B}$ is equivalent to a $B$-term built
by the syntax
\begin{align}
e &::= B ~\mid~ B ~ e ~\mid~ e ~ \circ ~ e
\labeqn{syntaxBo}
\end{align}
where $e_1\circ e_2$ denotes $B~e_1~e_2$.
We prefer to use the infix operator $\circ$ instead of $B$ that has two arguments
because associativity of $B$, that is, 
$B~e_1~(B~e_2~e_3) = B~(B~e_1~e_2)~e_3$
can be implicitly assumed.
This simplifies the further discussion on $B$-terms.
We will deal with only $B$-terms in syntax~\refeqn{syntaxBo} from now on.
The $\circ$ operator has lower precedence than application in this paper,
e.g., terms $B~B\circ B$ and $B\circ B~B$ represent $(B~B)\circ B$ and $B\circ(B~B)$, respectively.

The syntactic restriction by \refeqn{syntaxBo} does not suffice
to proffer a canonical representation of $B$-terms.
For example, both of the two $B$-terms $B\circ B~B$ and $B~(B~B) \circ B$ are
given in the form of \refeqn{syntaxBo},
but we can see that they are equivalent using \refeqn{Bo-push}.

A \emph{polynomial form of $B$-terms} is obtained
by putting a restriction on the syntax
so that no $B$ combinator occurs outside of the $\circ$ operator
while syntax~\refeqn{syntaxBo} allows
the $B$ combinators and the $\circ$ operators to occur
in an arbitrary position.
The restricted syntax is given as
\begin{align*}
e ::= e_B ~\mid~ e \circ e 
\qquad 
e_B ::= B ~\mid B~ e_B
\end{align*}
where terms in $e_B$ have a form of
$B(\dots(B(B~ B))\dots)$, that is $B^n B$ with some $n$,
called \emph{monomial}.
The syntax can be simply rewritten into
$e \dcolonequals B^n B ~\mid~ e \circ e$,
which is called \emph{polynomial}.

\begin{defi}
A $B$-term $B^n B$ is called \emph{monomial}.
A \emph{polynomial} 
is a $B$-term given in the form of
\begin{align*}
(B^{n_1} B) \circ (B^{n_2} B) \circ\dots\circ (B^{n_k} B)
\end{align*}
where $k>0$ and $n_1,\dots, n_k\geq0$ are integers.
In particular,
a polynomial is called \emph{decreasing}
when $n_1\geq n_2\geq\dots\geq n_k$.
The \emph{length} of a polynomial $P$
is the number of monomials in $P$,
i.e., the length of the polynomial above is $k$.
The numbers $n_1, n_2,\dots,n_k$ are called \emph{degrees}.
\end{defi}

In the rest of this subsection,
we prove that for any $B$-term $e$ there exists
a unique decreasing polynomial equivalent to $e$.
First, we show that $e$ has an equivalent polynomial.

\begin{lemC}[\cite{Statman10type}]
\lablem{H-exist}
For any $B$-term $e$,
there exists a polynomial equivalent to $e$.
\end{lemC}
\begin{proof}
We prove the statement by induction on the structure of $e$.
In the case of $e\equiv B$, the term itself is polynomial.
In the case of $e\equiv B~e_1$,
assume that $e_1$ has equivalent polynomial
$(B^{n_1} B)\circ(B^{n_2} B)\circ\dots\circ(B^{n_k} B)$.
Repeatedly applying equation~\refeqn{Bo-distr} to $B~e_1$,
we obtain a polynomial equivalent to $B~e_1$ 
as $(B^{n_1+1} B) \circ (B^{n_2+1} B) \circ\dots\circ (B^{n_k+1} B)$.
In the case of $e\equiv e_1\circ e_2$,
assume that $e_1$ and $e_2$ have equivalent polynomials
$P_1$ and $P_2$, respectively.
A polynomial equivalent to $e$ is given
by $P_1\circ P_2$.
\end{proof}


Next, we show that for any polynomial $P$
there exists a decreasing polynomial equivalent to $P$.
A key equation of the proof is
\begin{align}
(B^m B) \circ (B^n B) &=
(B^{n+1} B) \circ (B^m B)
\quad
\text{when~$m<n$,}
\labeqn{B-swap}
\end{align}
which is shown by
\begin{align*}
(B^m B) \circ (B^n B)
&=
B^m (B \circ(B^{n-m} B))
\\&=
B^m (B \circ (B~(B^{n-m-1} B)))
\\&=
B^m ((B (B (B^{n-m-1} B))) \circ B)
\\&
=
(B^{n+1} B) \circ (B^m B)
\end{align*}
using equations~\refeqn{Bo-distr} and~\refeqn{Bo-push}.

\begin{lem}
\lablem{H-decr}
Any polynomial $P$ has an equivalent decreasing polynomial $P'$
such that
\begin{itemize}
\item the length of $P$ and $P'$ are equal, and
\item the lowest degrees of $P$ and $P'$ are equal.
\end{itemize}
\end{lem}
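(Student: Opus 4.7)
My plan is induction on the length $k$ of the polynomial $P = (B^{n_1}B)\circ\dots\circ(B^{n_k}B)$, using equation \refeqn{B-swap} as the only rewrite rule. The base case $k=1$ is immediate: a monomial is already a decreasing polynomial of length $1$, and both its single degree (its minimum) and its length are preserved trivially. For the inductive step, the idea is to bubble an occurrence of the minimum degree all the way to the rightmost position, then apply the induction hypothesis to the length-$(k-1)$ prefix.

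Concretely, let $n_{\min}=\min\{n_1,\dots,n_k\}$ and fix a leftmost position $i$ at which it is attained. I would process positions $i, i+1, \dots, k-1$ from left to right, maintaining the invariant that after step $j$ the monomial $B^{n_{\min}}B$ sits at position $i+j$ and every degree at positions $1,\dots,i+j-1$ is at least $n_{\min}$. At each step I compare $n_{\min}$ with the degree $a$ immediately to its right. If $a > n_{\min}$, I apply \refeqn{B-swap} to replace $(B^{n_{\min}}B)\circ(B^{a}B)$ by $(B^{a+1}B)\circ(B^{n_{\min}}B)$, advancing the minimum one slot rightward while bumping its displaced neighbour to $a+1$. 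If $a = n_{\min}$, the two monomials are identical and I simply reinterpret the rightmost one as the new current copy, with no rewriting. After at most $k-i$ steps the polynomial has the shape $Q \circ (B^{n_{\min}}B)$, where $Q$ has length $k-1$ and all of its degrees are $\geq n_{\min}$.

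By the induction hypothesis, $Q$ is equivalent to a decreasing polynomial $Q'$ of the same length $k-1$ whose minimum degree equals $\min Q$, which is therefore at least $n_{\min}$. Setting $P' := Q' \circ (B^{n_{\min}}B)$ then yields a decreasing polynomial of length $k$ with minimum degree $n_{\min}$: the internal adjacencies of $Q'$ are decreasing by the induction hypothesis, and the rightmost degree of $Q'$ dominates the tail degree $n_{\min}$. Length and minimum have both been tracked throughout, so the two stated invariants hold.

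I do not foresee a deep obstacle beyond the mild bookkeeping around equal adjacent degrees: \refeqn{B-swap} demands a strict inequality $m<n$, which is precisely why the equal case must be handled by reinterpretation rather than rewriting. The fact that each swap strictly increases the displaced degree is harmless here, because bubbling is one-directional and uses at most $k-1$ swaps, so no well-founded termination measure beyond the outer induction on $k$ is needed.
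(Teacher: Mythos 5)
Your proof is correct and follows essentially the same route as the paper's: induction on the length of $P$, with equation~\refeqn{B-swap} as the sole rewriting step and the tail monomial carrying the minimum degree. The only organizational difference is that you select the global minimum and bubble it to the rightmost position before invoking the induction hypothesis once (selection-sort style), whereas the paper first applies the induction hypothesis to the length-$(k-1)$ prefix and then performs at most one swap at the tail, re-invoking the hypothesis on the resulting prefix when needed (insertion-sort style); both work, and your explicit handling of equal adjacent degrees correctly sidesteps the strictness requirement $m<n$ in~\refeqn{B-swap}.
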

\begin{proof}
We prove the statement by induction on the length of $P$.
When the length is 1, that is, $P$ is a monomial,
$P$ itself is decreasing and the statement holds.
When the length $k$ of $P$ is greater than 1,
take $P_1$ such that $P \equiv P_1 \circ (B^n B)$.
From the induction hypothesis,
there exists a decreasing polynomial
$P_1'\equiv(B^{n_1} B)\circ(B^{n_2} B)\circ\dots\circ (B^{n_{k-1}} B)$
equivalent to $P_1$,
and the lowest degree of $P_1$ is $n_{k-1}$.
If $n_{k-1}\geq n$, then $P'\equiv P_1'\circ (B^n~ B)$ is decreasing
and equivalent to $P$.
Since the lowest degrees of $P$ and $P'$ are $n$,
the statement holds.
If $n_{k-1} < n$,
$P$ is equivalent to
\begin{align*}
(B^{n_1}~ B) \circ
\dots
\circ (B^{n_{k-1}} B) \circ(B^n B)
&=
(B^{n_1} B) \circ
\dots
\circ (B^{n+1} B) \circ (B^{n_{k-1}} B)
\end{align*}
due to equation~\refeqn{B-swap}.
Putting the last term as $P_2\circ(B^{n_{k-1}} B)$,
the length of $P_2$ is $k-1$
and the lowest degree of $P_2$ is greater than or equal to $n_{k-1}$.
From the induction hypothesis,
$P_2$ has an equivalent decreasing polynomial $P_2'$
of length $k-1$
and the lowest degree of $P_2'$ greater than or equal to $n_{k-1}$.
Thereby we obtain
a decreasing polynomial $P_2'\circ(B^{n_{k-1}} B)$
equivalent to $P$
and the statement holds.
\end{proof}

\begin{exa}
Consider a $B$-term $e=B~(B~B~B)~(B~B)~B$.
First, applying equation~(B1),
\begin{align*}
e &= B~(B~B~B)~(B~B)~(B~B)
= B~B~B~(B~B~(B~B))
= B~(B~(B~B~(B~B)))
\end{align*}
so that every $B$ has at most two arguments.
Then replacing each two-argument $B$ to the infix $\circ$ operator,
obtain $B~(B~(B\circ(B~B)))$.
Applying equation~\refeqn{Bo-distr}, we have
\begin{align*}
B~(B~(B\circ(B~B)))
&=
B~((B~B)\circ(B~(B~B)))
\\&
=
(B~(B~B))\circ(B~(B~(B~B)))
\\&
= (B^2 B)\circ(B^3 B)\text.
\end{align*}
Applying equation~\refeqn{B-swap}, 
we obtain the decreasing polynomial
$(B^4 B)\circ(B^2 B)$ equivalent to~$e$.
\end{exa}

Every $B$-term
has at least one equivalent decreasing polynomial as shown so far.
To conclude this subsection,
we show the uniqueness of decreasing polynomial equivalent to any $B$-term,
that is,
every $B$-term $e$ has no two distinct decreasing polynomials equivalent to $e$.

The proof is based on the idea that
$B$-terms correspond to unlabeled binary trees.
Let $M$ be a term which is constructed from variables $x_1,\dots,x_k$ and their applications.
Then we can show that if
the $\lambda$-term $\lambda x_1.\dotsLam{x_k}.\, M$ is in $\CL{B}$,
then $M$ is obtained by putting parentheses to some positions in the sequence
$x_1~ \dots~ x_k$.
More precisely, we have the following lemma.
\begin{lem}
\lablem{bform}
Every $B$-term is $\beta\eta$-equivalent to a $\lambda$-term
of the form
$\lambda x_1.\dotsLam x_k.~M$ with some $k>2$ 
where $M$ satisfies the following two conditions:
(1) $M$ consists of only the variables $x_1,\dots,x_k$ and their applications, and
(2) for every subterm of $M$ which is in the form of $M_1~M_2$, if $M_1$ has a variable $x_i$, then $M_2$ does not have any variable $x_j$ with $j\leq i$.
\end{lem}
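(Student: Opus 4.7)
The plan is to prove the lemma by induction on the structure of the $B$-term $e$, after first reformulating condition~(2) into a convenient tree-theoretic statement. Since $B$ is linear and linearity is preserved under application, every $B$-term is a linear $\lambda$-term, so in any candidate presentation $\lambda x_1 \dots x_k.~M$ each variable $x_i$ occurs in $M$ exactly once. Under this linearity, condition~(2) is equivalent to saying that when $M$ is viewed as a binary tree of applications, its leaves read left-to-right form the sequence $x_1, x_2, \dots, x_k$. With this reformulation, the induction becomes an assertion about tree labellings.

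The base case $e = B$ is immediate with $k = 3$ and $M = x_1(x_2~x_3)$. For the inductive step $e = e_1\,e_2$, the induction hypothesis yields $e_1 \equiv \lambda y_1 \dots y_m.~N_1$ and $e_2 \equiv \lambda z_1 \dots z_n.~N_2$ with $m,n>2$ and both of the asserted form. Because $y_1$ has the smallest index and condition~(2) forces the leaf of smallest index to lie to the left at every binary split, $y_1$ must occupy the unique leftmost leaf of $N_1$. Hence we may write $N_1 = y_1~R_1~R_2~\cdots~R_p$ in left-associated form for some $p\geq 1$, where $R_1,\dots,R_p$ are the right-subtrees along the left spine of $N_1$; from the IH-guaranteed leaf ordering of $N_1$, these subtrees carry successive in-order blocks of $y_2,\dots,y_m$.

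A single $\beta$-step then gives $e_1\,e_2 \equiv \lambda y_2 \dots y_m.~(\lambda z_1 \dots z_n.~N_2)~R_1~\cdots~R_p$. I would then split on whether $p\geq n$ or $p<n$. In the former case, all $n$ redexes fire and I obtain $M = N_2[z_i := R_i]_{i=1}^{n}~R_{n+1}~\cdots~R_p$ with $k = m-1 \geq p \geq n \geq 3$. In the latter, only $p$ redexes fire and the residual abstraction $\lambda z_{p+1} \dots z_n$ absorbs into the outer $\lambda$, giving $\lambda y_2 \dots y_m~z_{p+1} \dots z_n.~N_2[z_i := R_i]_{i=1}^{p}$ with $k = (m-1)+(n-p) \geq 3$. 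In either case, after renaming the bound variables to $x_1,\dots,x_k$ in the order in which they appear as leaves of $M$, condition~(1) holds by construction.

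The main obstacle will be the bookkeeping needed to verify condition~(2) after the substitution, i.e., showing that substituting the subtrees $R_i$ in place of the leaves $z_i$ of $N_2$ preserves the left-to-right leaf order. This reduces to two ingredients already supplied by the IH: the leaves of $N_2$ are $z_1,\dots,z_n$ in order, and the $R_j$'s each hold a consecutive in-order block of $\{y_2,\dots,y_m\}$. Splicing these blocks at the $z_i$ positions (and, in the first case, appending the remaining $R_{n+1},\dots,R_p$ as rightmost children) yields leaves in the concatenated order $y_2,\dots,y_m$ (case~A) or $y_2,\dots,y_m,z_{p+1},\dots,z_n$ (case~B), so the reformulated condition~(2), and thereby the original, holds.
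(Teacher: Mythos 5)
Your proof is correct, and it takes a genuinely different route from the paper's. The paper proves the lemma by induction over the restricted grammar $e ::= B \mid B\,e \mid e\circ e$ (obtained after reducing every $B$ to at most two arguments), and its printed argument in fact only exhibits the cases $e\equiv B$ and $e\equiv B\,e_1$, leaving the composition case implicit. You instead induct directly on raw application $e ::= B \mid e\,e$, which forces you to confront the general case head-on: writing the head's body as a left spine $y_1\,R_1\cdots R_p$ and splitting on whether the spine supplies at least the arity $n$ of the argument ($p\ge n$, full contraction plus leftover $R_{n+1},\dots,R_p$) or not ($p<n$, residual abstractions $z_{p+1},\dots,z_n$ absorbed into the outer binder). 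Both of the paper's non-base cases are instances of your case $p<n$ with $e_1=B$, so your argument strictly subsumes the printed one. The other useful device in your write-up is the reformulation of condition~(2), via linearity of $B$-terms, as ``the leaves of $M$ read left-to-right are exactly $x_1,\dots,x_k$''; this is exactly the identification of $B$-terms with unlabeled binary trees that the paper exploits immediately after the lemma, so making it explicit here and verifying that splicing consecutive blocks at the positions $z_1,\dots,z_p$ (or $z_1,\dots,z_n$) preserves the in-order reading is the right way to discharge condition~(2). The arity bookkeeping ($k=m-1\ge p\ge n\ge 3$ in one case, $k=(m-1)+(n-p)\ge 3$ in the other) checks out, and since each $R_i$ is an applicative term of variables, no new redexes are created by the substitution. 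What your approach buys is a self-contained proof that does not presuppose the two-argument normalization step; what the paper's buys is brevity, at the cost of an omitted case.
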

\begin{proof}
We prove the statement by induction.
In the case of $e\equiv B$,
$e$ is equivalent to $\Lam{x_1}{x_2}{x_3}.\, x_1(x_2 x_3)$,
the statement holds.
In the case of $e\equiv B~e_1$,
$e$ is equivalent to $\Lam{x_1}{x_2}.\, e_1~(x_1~x_2)$.
From the induction hypothesis,
$e_1$ is equivalent to $\Lam{x_1}{x_2}\dotsLam{x_k}.\, e_1'$ where $e_1'$ satisfies the conditions (1) and (2).
Then, we can see that $e_1'\left[(x_1~x_2)/x_1\right]$ also satisfies the conditions (1) and (3).
\end{proof}

From this lemma, we see that we do not need to specify variables in $M$ and
we can simply write like
$\x~\x~(\x~\x) = ~x_1~x_2~(x_3~x_4)$.
Formally speaking, every $\lambda$-term in $\CL{B}$
uniquely corresponds to a term built from $\x$ alone by the map
$(\lambda x_1\dotsLam{x_k}.\, M) \mapsto M[\x/x_1,\dots,\x/x_k]$. 
We say an unlabeled binary tree (or simply, binary tree) for
a term built from $\x$ alone since every term built from $\x$ alone
can be seen as an unlabeled binary tree.
(A term $\x$ corresponds to a leaf and $t_1~t_2$ corresponds to
the tree with left subtree $t_1$ and right subtree $t_2$.)
To specify the applications in binary trees, we write $\<t_1,t_2>$ for the application $t_1~t_2$.
For example,
$B$-terms $B=\Lam{x}{y}{z}.\, x~(y~z)$ and
$B~B=\Lam{x}{y}{z}{w}.\, x~y~(z~w)$ are
represented by
$\<\star,\<\star,\star>>$ and
$\<\<\star,\star>,\<\star,\star>>$, respectively.

We will present an algorithm
for constructing the corresponding decreasing polynomial from a given binary tree.
First let us define a function $\nodes_i$ with integer $i$
which maps binary trees to lists of integers:
\begin{align*}
\nodes_i(\x) &= [~]
&
\nodes_i(\<t_1,t_2>) &=
\nodes_{i+\size{t_1}}(t_2) \concat \nodes_i(t_1) \concat [i]
\end{align*}
where $\dplus$ concatenates two lists
and $\size{t}$ denotes the number of leaves.
For example,
$\nodes_0(\<\<\star,\star>,\<\star,\star>>) = [2,0,0]$ and
$\nodes_1(\<\<\star,\<\star,\star>>,\<\star,\<\star,\star>>>) = [4,4,2,1,1]$.
Informally, the $\nodes_i$ function returns a list of integers
which is obtained by labeling both leaves and nodes
in the following steps.
First each leaf of a given tree is labeled by $i,i+1,i+2,\dots$
in left-to-right order.
Then each internal node of the tree is labeled by
the same label as its leftmost descendant leaf.
The $\nodes_i$ functions return a list of labels of internal nodes
in decreasing order.
\Reffig{labeling} shows three examples of labeled binary trees obtained by this labeling procedure for $i=-1$.
Let $t_j$ ($j=1,2,3$) be the unlabeled binary tree corresponding to $e_j$.
From the labeled binary trees in \Reffig{labeling}, we have $\nodes_{-1}(t_1) = [1,-1,-1]$, $\nodes_{-1}(t_2) = [3,1,1,-1,-1]$, and $\nodes_{-1}(t_3) = [5,2,2,2,0,-1,-1,-1]$.
One may notice that a binary tree \(t_3'\) 
corresponding \(\eta\)-equivalent terms of \(e_3\) is
obtained by removing the leaf 7 and its root.
From \(\nodes_{-1}(t_3')\concat[-1]=\nodes_{-1}(t_3)\),
we have \(\nodes(t_3')=\nodes(t_3)\).
It is easy to show that the \(\nodes\) function returns the same values
for \(\eta\)-equivalent \(B\)-terms.
The length of the list equals
the number of nodes, that is, 
smaller by one than the number of variables in the $\lambda$-term.

\def\Leaf(#1){}
\def\Node(#1;#2){edge [#1] [#2]}
\def\Edge[#1][#2]{edge [#1] [#2]}
\begin{figure}[t]
\centering
\subcaptionbox{\parbox[t]{.17\textwidth}{Binary tree \(t_1\)\\
for \(\lambda\)-term \(e_1\)\labfig{btree1}}}[.21\linewidth]{
\hspace{-25pt}
\begin{forest}
[ ,for tree={s sep=4pt, circle, draw, minimum size=1.5em, inner sep=0pt, font=\footnotesize}
  ,Nd=\(-1\!\), [ ,Nd=\(-1\!\), [\(-1\)] [\(0\)] ]
                [ ,Nd=\(1\!\),  [\(1\)]  [\(2\)] ]
]
\end{forest}
}
\subcaptionbox{\parbox[t]{.17\textwidth}{Binary tree \(t_2\)\\
for \(\lambda\)-term \(e_2\)\labfig{btree2}}}[.29\linewidth]{
\hspace{-26pt}
\begin{forest}
[ ,for tree={s sep=10pt, circle, draw, minimum size=1.5em, inner sep=0pt, font=\footnotesize}
  ,Nd=\(-1\!\), [ ,for tree={s sep=4pt}
                  ,Nd=\(-1\!\), [\(-1\)] [\(0\)] ]
                [ ,for tree={s sep=4pt}
                  ,Nd=\(1\!\),  [ ,Nd=\(1\!\!\!\;\), [\(1\)] [\(2\)] ]
                                [ ,Nd=\(3\!\), [\(3\)] [\(4\)] ] ]
]
\end{forest}
}
\subcaptionbox{Binary tree \(t_3\) for \(\lambda\)-term \(e_3\)\labfig{btree3}}[.47\linewidth]{
\hspace{-30pt}
\begin{forest}
[ ,for tree={s sep=128pt, circle, draw, minimum size=1.5em, inner sep=0pt, font=\footnotesize}
  ,Nd=\(-1\,\), [ ,for tree={s sep=14pt}
                  ,Nd=\(-1\), [ ,Nd=\(-1\!\), [\(-1\)] 
                                                [ ,for tree={s sep=4pt}
                                                  ,Nd=\(0\!\), [\(0\)] [\(1\)] ] ]
                                [ ,Nd=\(2\,\),  [ ,for tree={s sep=16pt}
                                                  ,Nd=\(2\!\), [ ,for tree={s sep=4pt}
                                                                 ,Nd=\(2\!\), [\(2\)] [\(3\)] ]
                                                               [\(4\)] ]
                                            [ ,for tree={s sep=4pt}
                                              ,Nd=\(5\!\), [\(5\)] [\(6\)] ] ] ]
              [\(7\)]
]  
\end{forest}
}
\\
\centering{
\begin{align*}
\text{where}\;\,\quad
e_1 &= \Lam{x_1}{x_2}{x_3}{x_4}.\,
x_1~x_2~(x_3~x_4)
\\
e_2 &= \Lam{x_1}{x_2}{x_3}{x_4}{x_5}{x_6}.\,
x_1~x_2~(x_3~x_4~(x_5~x_6))
\\
e_3 &= \Lam{x_1}{x_2}{x_3}{x_4}{x_5}{x_6}{x_7}{x_8}{x_9}.\,
x_1~(x_2~x_3)~(x_4~x_5~x_6~(x_7~x_8))~x_9
\\&(=_\eta \Lam{x_1}{x_2}{x_3}{x_4}{x_5}{x_6}{x_7}{x_8}.\,
x_1~(x_2~x_3)~(x_4~x_5~x_6~(x_7~x_8)))
\\\text{and}\quad
\nodes(t_1) &= [1]
\\
\nodes(t_2) &= [3,1,1]
\\
\nodes(t_3) &= [5,2,2,2,0]\\[-7ex]
\end{align*}
}
	\caption{Labeled binary trees}
	\labfig{labeling}
\end{figure}

\begin{defi}
$\nodes$ is the function which takes a binary tree $t$ and
returns the list of non-negative integers in $\nodes_{-1}(t)$,
that is, the list obtained by excluding trailing all $-1$'s in $\nodes_{-1}(t)$.
\end{defi}
%

The following lemma claims that the $\nodes$ function computes
a list of degrees of a decreasing polynomial corresponding to a given $\lambda$-term.

\begin{lem}
\lablem{H-uniq}
A decreasing polynomial $(B^{n_1} B)\circ(B^{n_2} B)\circ\dots\circ(B^{n_k} B)$
is $\beta\eta$-equivalent to the $\lambda$-term $e\in\CL{B}$
corresponding to a binary tree $t$ such that
$\nodes(t) = [n_1,n_2,\dots,n_k]$.
\end{lem}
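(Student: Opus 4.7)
The plan is to proceed by induction on $k$, strengthening the induction hypothesis with a structural invariant on the tree that is crucial for the inductive step.

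For the base case $k=1$, the polynomial is the monomial $B^n B$, which $\beta\eta$-reduces to
$\Lam{x_1}{x_2}\dotsLam{x_{n+3}}.\,x_1 x_2\cdots x_{n+1}(x_{n+2}\, x_{n+3})$.
The corresponding binary tree $M_n$ consists of a left spine of length $n+1$ (with each right child along the spine being a single leaf), terminating at the leftmost leaf $x_1$, and the root's right subtree is $\<\x,\x>$. A direct calculation from the recursive definition of $\nodes$ gives $\nodes_i(M_n) = [i+n+1,\underbrace{i,\ldots,i}_{n+1}]$, so $\nodes(M_n) = [n]$ after stripping trailing $-1$'s.

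For the inductive step, write $P\equiv P'\circ(B^n B)$, where $P'$ is a decreasing polynomial of length $k-1$ with smallest degree $n_{k-1}\geq n$. I strengthen the induction hypothesis with the following structural invariant on the tree $t'$ associated to $P'$: its leftmost leaf lies at depth at least $n_{k-1}+1$ along the left spine, and the $n_{k-1}$ right children of the spine nodes lying closest to the leaf are all leaves. The base case $M_n$ trivially satisfies this with degree $n$. Writing $P'$ in its normal form $\Lam{z_1}{z_2}\dotsLam{z_p}.\,z_1 A_1\cdots A_a$ by \reflem{bform}, the reduction of $P = \lambda u.\, P'((B^n B)\, u)$ proceeds by substituting $(B^n B)\, u = \Lam{y_1}{y_2}\dotsLam{y_n}{w}{v}.\,u\,y_1\cdots y_n(w\, v)$ for $z_1$ and $\beta$-normalising. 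Three sub-cases arise depending on whether $a=n+2$, $a>n+2$, or $a\leq n+1$ (the last requiring $\eta$-expansion by $n+2-a$ fresh binders).

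In every sub-case the new tree $t$ arises from $t'$ by a concrete local surgery at the top of its left spine, creating a single new internal node. The hypothesis $n\leq n_{k-1}$ together with the invariant on $t'$ imply that the innermost $n$ right children along the spine of $t$ are all leaves, so their contributions to the recursive unfolding $\nodes_i(\<t_L,t_R>)=\nodes_{i+\size{t_L}}(t_R)\concat\nodes_i(t_L)\concat[i]$ vanish, while the leftmost-descendant label of the newly created node evaluates to exactly $n$. Therefore the only new entry lands cleanly at the end of the list, yielding $\nodes(t) = \nodes(t')\concat[n] = [n_1,\ldots,n_{k-1},n]$, and a routine check confirms that $t$ again satisfies the invariant for its new smallest degree $n$. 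The main obstacle is the bookkeeping in the $\eta$-expanded sub-case, where $t$ acquires fresh leaves that must slot into its right-spine structure in a precise way; the strengthened invariant is exactly what collapses the intermediate $\nodes$ contributions and certifies the clean placement of the new label.
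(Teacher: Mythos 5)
Your proposal is correct and follows essentially the same route as the paper: induction on the length, the monomial base case, and an inductive step that peels off the last factor $B^n B$, substitutes it for the head variable of the normal form of $P'$, and tracks the single new application node whose leftmost-descendant label is $n$, giving $\nodes(t)=\nodes(t')\concat[n]$. The only cosmetic difference is that you carry the spine condition (bottom $n_{k-1}$ right children are leaves) as a strengthened induction hypothesis, whereas the paper observes it is already forced by $\nodes(t')=[n_1,\dots,n_{k-1}]$ containing no entry below $n_{k-1}$, so no strengthening is needed.
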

\begin{proof}
We prove the statement by induction on the length of the polynomial $P$.

When $P\equiv B^{n} B$ with $n\geq0$,
it is found to be equivalent to the $\lambda$-term
\begin{equation*}
\Lam{x_1}{x_2}{x_3}\dotsLam{x_{n+1}}\dotLam{x_{n+2}}\dotLam{x_{n+3}}.\,
x_1~
x_2~ 
x_3~ 
\dots~ x_{n+1}~ (x_{n+2}~ x_{n+3})
\end{equation*}
by induction on $n$.
This $\lambda$-term corresponds to the binary tree
$t = \<\poly{n~{\rm leaves}},\<\star,\star>>$.
Then we have that $\nodes(t) = [n]$ holds
from $\nodes_{-1}(t) = [n,\underbrace{-1,-1,\dots,-1}_{n+1}]$.

When $P\equiv P'\circ(B^{n} B)$ with
$P'\equiv(B^{n_1} B)\circ\dots\circ(B^{n_{k}} B)$,
$k\geq 1$ and $n_1\geq\dots\geq n_{k}\geq n\geq0$,
there exists a $\lambda$-term $\beta\eta$-equivalent to $P'$
corresponding to a binary tree $t'$ such that $\nodes(t')=[n_1,\dots,n_{k}]$
from the induction hypothesis.
The binary tree $t'$ must have
the form of $\<\<\poly{n_k~{\rm leaves}},t_1>,\dots,t_m>$
with $m\geq1$ and some trees $t_1,\dots,t_m$,
otherwise $\nodes(t')$ would contain an integer smaller than $n_k$.
From the definition of $\nodes$ and $\nodes_i$, we have
\begin{align}
\nodes(t') &= \nodes_{s_m}(t_m) \concat \dots \concat \nodes_{s_1}(t_1)
\labeqn{nodes_t'}
\end{align}
where $s_j = n_k + 1 + \sum_{i=1}^{j-1}\size{t_i}$.
Additionally, the structure of $t'$ implies
$
P' =
\lambda x_1.\dotsLam{x_{l}}.\,\allowbreak
x_1~x_2~\dots~x_{n_k+1}~e_1\dots e_m
$
where $e_i$ corresponds to a binary tree $t_i$ for $i=1,\dots,m$.
From $B^{n}~B=
\lambda y_1.\dotsLam{y_{n+3}}.\,\allowbreak
y_1~ y_2\dots y_{n+1}~ (y_{n+2}~ y_{n+3})$,
we compute a $\lambda$-term $\beta\eta$-equivalent to $P\equiv P'\circ(B^n B)$ by
\begin{align*}
P
&=
\lambda x.~ P' (B^{n} B~ x)
\\&=
\lambda x.~
(\lambda x_1\dotsLam{x_{l}}.~
x_1~x_2\dots x_{n_k+1}~e_1~\dots~e_{m})
\\&\qquad\qquad
(\lambda y_2\dotsLam{y_{n+3}}.~
x~ y_2\dots y_{n+1}~ (y_{n+2}~ y_{n+3}))
\\&=
\Lam{x}{x_2}\dotsLam{x_{l}}.~
(\lambda y_2\dotsLam{y_{n+3}}.~
x~ y_2\dots y_{n+1}~ (y_{n+2}~ y_{n+3}))
~x_2\dots x_{n_k+1}~e_1\dots e_{m}
\\&=
\Lam{x}{x_2}\dotsLam{x_{l}}.\,
\\&\qquad
(\Lam{y_{n+1}}{y_{n+2}}{y_{n+3}}.~
x~ x_2\dots x_n~y_{n+1}~(y_{n+2}~y_{n+3}))~
x_{n+1}\dots x_{n_k+1}~e_1\dots e_{m}
\end{align*}
where $n_k\geq n$ is taken into account.
We split the proof into four cases:
(i) $n_k = n$ and $m=1$, 
(ii) $n_k = n$ and $m>1$, 
(iii) $n_k = n+1$, and
(iv) $n_k > n+1$.
In the case~(i) where $n_k = n$ and $m=1$, we have
\begin{align*}
P &= 
\Lam{x}{x_2}\dotsLam{x_{l}}\dotLam{y_{n+3}}.~
x~ x_2\dots x_n~x_{n+1}~(e_1~y_{n+3})\text.
\end{align*}
whose corresponding binary tree $t$ is 
$\<\poly{\mbox{$n$ leaves}},\<t_1,\star>>$.
From equation~\refeqn{nodes_t'},
$\nodes(t) = \nodes_{n+1}(t_1)\dplus [n+1]
= \nodes(t')\dplus[n+1] = [n_1,\dots,n_k,n+1]$, thus the statement holds.
In the case~(ii) where $n_k = n$ and $m>1$, we have
\begin{align*}
P &= 
\Lam{x}{x_2}\dotsLam{x_{l}}.~
x~ x_2\dots x_n~x_{n+1}~(e_1~e_2)~e_3\dots e_{m}\text.
\end{align*}
whose corresponding binary tree $t$ is 
$\<\<\poly{\mbox{$n$ leaves}},\<t_1,t_2>,t_3>,\dots,t_m>$.
Hence, $\nodes(t)=\nodes(t')\dplus[n+1]$ holds again from equation~\refeqn{nodes_t'}.
In the case~(iii) where $n_k = n+1$, we have
\begin{align*}
&P = 
\Lam{x}{x_2}\dotsLam{x_{l}}.~
x~ x_2\dots x_n~x_{n+1}~(x_{n+2}~e_1)~e_2\dots e_{m}\text{, or}
\end{align*}
whose corresponding binary tree $t$ is 
$\<\<\poly{\mbox{$n$ leaves}},\<\star,t_1>,t_2>,\dots,t_m>$.
Hence, $\nodes(t)=\nodes(t')\dplus[n+1]$ holds from equation~\refeqn{nodes_t'}.
In the case~(iv) where $n_k\geq n+2$, we have
\begin{align*}
P = 
\Lam{x}{x_2}\dotsLam{x_{l}}.~
x~ x_2\dots x_n~x_{n+1}~(x_{n+2}~x_{n+3})~\dots~e_1\dots e_{m}\text,
\end{align*}
whose corresponding binary tree $t$ is 
$\<\<\poly{\mbox{$n$ leaves}},\<\star,\star>,\dots,t_1>,\dots,t_m>$.
Hence, $\nodes(t)=\nodes(t')\dplus[n+1]$ holds from equation~\refeqn{nodes_t'}.
\end{proof}

\begin{exa}
Consider the $\lambda$-terms $e_1,e_2,e_3$ given in \Reffig{labeling}.
The $\lambda$-terms $e_1$, $e_2$, and $e_3$ given in \Reffig{labeling} are $\beta\eta$-equivalent to $B^1B$, $(B^3B)\circ(B^1B)\circ(B^1B)$, and $(B^5B)\circ(B^2B)\circ(B^2B)\circ(B^2B)\circ(B^0B)$, respectively,
since $\nodes(t_1) = [1]$, $\nodes(t_2) = [3,1,1]$, $\nodes(t_3) = [5,2,2,2,0]$. (Recall $t_j$ ($j=1,2,3$) is the unlabeled binary tree corresponding to $e_j$)
\end{exa}

We conclude the uniqueness of 
decreasing polynomials for $B$-terms shown in the following theorem.

\begin{thm}
\labthr{canonical}
Every $B$-term $e$ has a unique decreasing polynomial.
\end{thm}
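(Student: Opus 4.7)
The proof breaks cleanly into existence and uniqueness, and both halves follow almost directly from results already in hand. For existence, I would simply chain \reflem{H-exist} with \reflem{H-decr}: given any $B$-term $e$, the first lemma produces a polynomial $P$ equivalent to $e$, and the second lemma rewrites $P$ into an equivalent decreasing polynomial. This gives the existence of at least one decreasing polynomial equivalent to $e$.

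The substantive content is uniqueness, and here the plan is to route everything through the binary-tree representation. Suppose that a $B$-term $e$ is equivalent to two decreasing polynomials
\[
P \equiv (B^{n_1} B)\circ\dots\circ(B^{n_k} B),
\qquad
P'\equiv (B^{m_1} B)\circ\dots\circ(B^{m_l} B).
\]
By \reflem{bform}, $e$ has a canonical $\lambda$-term representative $\lambda x_1\dotsLam{x_N}.M$ and hence corresponds to a unique unlabeled binary tree $t$ via $M\mapsto M[\x/x_1,\dots,\x/x_N]$. Applying \reflem{H-uniq} to both $P$ and $P'$ yields
\[
[n_1,\dots,n_k] \;=\; \nodes(t) \;=\; [m_1,\dots,m_l],
\]
so the two degree sequences are literally the same list. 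Since both sequences are weakly decreasing, this equality of lists forces $k=l$ and $n_i = m_i$ for every $i$, hence $P\equiv P'$.

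The only subtlety I anticipate is that $\beta\eta$-equivalence, not merely $\beta$-equivalence, is in play, so two equivalent $B$-terms might a priori correspond to binary trees of different sizes (differing by the kind of trailing leaf shown for $t_3$ and $t_3'$ in \reffig{labeling}). This is exactly why \nodes is defined to strip the trailing $-1$'s coming from $\nodes_{-1}$; as the excerpt already observes, $\eta$-related $B$-terms produce the same $\nodes$ value. So once I select any binary-tree witness $t$ for $e$, the list $\nodes(t)$ is an invariant of the $\beta\eta$-equivalence class, and the argument above goes through without needing to single out a ``canonical'' tree. The main obstacle is thus really a bookkeeping one rather than a mathematical one: making sure the statement of \reflem{H-uniq} can be applied to both polynomials against the same tree $t$, which is immediate from the $\eta$-invariance of \nodes.
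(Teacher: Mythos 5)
Your proof is correct and follows essentially the same route as the paper: existence by chaining \reflem{H-exist} with \reflem{H-decr}, and uniqueness by invoking \reflem{H-uniq} through the binary-tree/$\nodes$ correspondence. The paper's own proof is just a terser version of this; your explicit handling of the $\eta$-invariance of $\nodes$ fills in a detail the paper only gestures at in the discussion preceding \reflem{H-uniq}.
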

\begin{proof}
For any given $B$-term $e$, 
we can find a decreasing polynomial for $e$
from \reflem{H-exist} and \reflem{H-decr}.
Since no other decreasing polynomial can be equivalent to $e$
from \reflem{H-uniq}, the present statement holds.
\end{proof}

This theorem implies that
the decreasing polynomial of $B$-terms can be used 
as their \emph{canonical representation},
which is effectively derived as shown in
\reflem{H-exist} and \reflem{H-decr}.

As a corollary of the theorem,
we can show the ``only if'' statement of \refthr{eqB},
which corresponds to the completeness of the equation system.

\begin{proof}[\proofname ~of \refthr{eqB}]
Let $e_1$ and $e_2$ be equivalent $B$-terms,
that is, their $\lambda$-terms are $\beta\eta$-equivalent.
From \refthr{canonical},
their decreasing polynomials are the same.
Since the decreasing polynomial is derived from $e_1$ and $e_2$
by equations~(B1), (B2), and (B3)
according to the proofs of \reflem{H-exist} and \reflem{H-decr},
equivalence between $e_1$ and $e_2$ is also derived from these equations.
\end{proof}

\paragraph{\textbf{Comparison with Curry's compositive normal form}}
Curry~\cite{Curry30ajm2} has introduced a similar normal form for terms 
built from regular combinators%
\footnote{
A regular combinator is a combinator in which no lambda abstraction
occurs inside function application.
}%
, including $B$-terms.
Curry's normal form is called~\emph{compositive}
~\cite{Piperno89tcs}
since it is given
as a composition of four special terms, a $K$-term, $W$-term, $C$-term, and $B$-term.
A $B$-term in Curry's normal form is expressed by
\[
(B^{n_1} B^{m_1})\circ(B^{n_2} B^{m_2})\circ\dots\circ(B^{n_k} B^{m_k})
\]
where $k>0$, $n_1>n_2>\dots>n_k\geq0$ and $m_i>0$ for any $i=1,\dots,k$.
Since we have
\[
B^{n} B^{m}
= B^{n}(\underbrace{B\circ\dots\circ B}_{m})
= \underbrace{(B^n B)\circ\dots\circ(B^n B)}_{m}
\]
because of equation~(B2'),
the form is equivalent to 
\[
\underbrace{(B^{n_1} B)\circ\dots\circ(B^{n_1} B)}_{m_1}\circ
\underbrace{(B^{n_2} B)\circ\dots\circ(B^{n_2} B)}_{m_2}\circ\dots\circ
\underbrace{(B^{n_k} B)\circ\dots\circ(B^{n_k} B)}_{m_k}
\]
which gives a decreasing polynomial.
Curry informally proved the uniqueness of the normal form
by an observation that 
$B^{n} B^{m}=\lambda x_0.\dotsLam{x_{n+m+1}}. x_0\dots x_n~(x_{n+1}\dots x_{n+m+1})$,
while we have shown the exact correspondence
between a $B$-terms as a lambda term and
its normal form in decreasing polynomial representation.

\subsection{Relationship with Thompson's Group}
In this subsection, we explore a relationship between polynomials and \emph{Thompson's group} $F$ \cite{McKenzie73word}.
Thompson's group $F$ is defined to be the group generated by formal elements $x_n$ ($n = 0,1,\dots$)
with  relations $x_mx_n = x_nx_{m+1}$ for any $m > n$.
Consider the map
\[
	f : \CL B \ni (B^{n_1}B)\circ \dots \circ (B^{n_k}B) \mapsto x_{n_1}^{-1}\dots x_{n_k}^{-1} \in F.
\]
The map $f$ is well-defined since for any $m > n$,
\[
f((B^nB)\circ (B^mB)) = x_n^{-1}x_m^{-1} = (x_mx_n)^{-1} = (x_nx_{m+1})^{-1} = x_{m+1}^{-1}x_n^{-1} = f((B^{m+1}B)\circ (B^nB)).
\]
We can think of $(\CL B,\circ)$ as a semigroup since $(X \circ Y) \circ Z = X \circ (Y \circ Z)$ for any $X,Y,Z\in\CL B$,
and $f : \CL B \rightarrow F$ is a semigroup homomorphism under this semigroup structure of $\CL B$. 
By definition, $f$ is a semigroup isomorphism between $\CL B$ and the subsemigroup $N$ of $F$ generated by $x_n^{-1}$ ($n=0,1,\dots$).

It is known \cite{Belk04thesis} that every element of $N$ corresponds to an infinite sequence of binary trees $(t_0,t_1,\dots)$ (called a binary forest) where there exists $k_0$ such that $t_k = \x$ for any $k \ge k_0$.
\begin{defi}
The binary forest representation of an element of $N$ is defined inductively as follows.
\begin{enumerate}
\item The binary forest representation of $x_n^{-1}$ is $(\underbrace{\x,\dots,\x}_n,\<\x,\x>,\x,\dots)$.
\item If $y \in N$ corresponds to the binary forest $(t_0,t_1,\dots)$, $yx_n^{-1}$ corresponds to the binary forest
\[
	(t_0,t_1,\dots,t_{n-1},\<t_n,t_{n+1}>,t_{n+2},\dots).
\]
\end{enumerate}
We can see the binary forests corresponding to $x_n^{-1}x_m^{-1}$ and $x_{m+1}^{-1}x_n^{-1}$ are equal to each other for any $n,m$.
\end{defi}
(In fact, \cite{Belk04thesis} gave forest representations for the elements in the submonoid of $F$ generated by $x_n$ ($n=0,1,\dots$), not $x_n^{-1}$).
We show the binary forest representation of $x_{n_1}^{-1}\dots x_{n_k}^{-1}$ can be obtained from the binary tree corresponding to the $\lambda$-term of $(B^{n_1}B)\circ \dots \circ (B^{n_k}B)$.
\begin{thm}
Let $\<\dots\<\<\x,t_1>,t_2>\dots,t_k>$ be the binary tree corresponding to the $\lambda$-term of the polynomial $(B^{n_1}B)\circ \dots \circ (B^{n_k}B)$.
Then, the binary forest representation of $f((B^{n_1}B)\circ \dots \circ (B^{n_k}B)) = x_{n_1}^{-1}\dots x_{n_k}^{-1}$ is given by
\[
	(t_1,t_2,\dots,t_k,\x,\x,\dots).
\]
\end{thm}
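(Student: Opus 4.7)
The plan is to prove the theorem by induction on the length of the polynomial, tracking how a single right-composition with $B^{n_i}B$ on the tree side mirrors the right multiplication $y\cdot x_{n_i}^{-1}$ on the forest side.

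For the base case of a single monomial $B^{n_1}B$, I would start from the explicit $\lambda$-term $\lambda x_1\dotsLam{x_{n_1+3}}.\,x_1\,x_2\,\dots\,x_{n_1+1}(x_{n_1+2}\,x_{n_1+3})$ obtained in \reflem{H-uniq}. Its binary tree has a left spine of length $n_1+1$ whose hanging right subtrees are $n_1$ copies of $\x$ followed by $\<\x,\x>$, which matches the forest $(\underbrace{\x,\dots,\x}_{n_1},\<\x,\x>,\x,\dots)$ of $x_{n_1}^{-1}$ by definition.

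For the inductive step, I would write $P=P'\circ(B^{n_k}B)$ with $P'=(B^{n_1}B)\circ\cdots\circ(B^{n_{k-1}}B)$ and apply the induction hypothesis to $P'$, obtaining the hanging subtrees $(t_1',\dots,t_{k'}')$ as both the spine decomposition of $P'$'s tree and the leading entries of the forest of $f(P')$. The heart of the proof is a $\beta$-normalization computation of $\lambda z.\,P'((B^{n_k}B)\,z)$. Using that $(B^{n_k}B)\,z=\lambda y_2\dotsLam{y_{n_k+3}}.\,z\,y_2\,\dots\,y_{n_k+1}(y_{n_k+2}\,y_{n_k+3})$ has $n_k+2$ positional arguments, the subtrees $t_1',\dots,t_{k'}'$ of $P'$ are fed in order into these positional slots. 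The resulting tree's spine decomposition should match the forest obtained from $(t_1',\dots,t_{k'}',\x,\x,\dots)$ by merging positions $n_k$ and $n_k+1$, which is exactly the definition of $y\cdot x_{n_k}^{-1}$.

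The main obstacle is the case analysis on how $k'$ compares with $n_k$. When $k'\le n_k$, the excess positional slots of $(B^{n_k}B)\,z$ are filled with fresh $\lambda$-bound variables, yielding a new spine of length $n_k+1$ whose trailing subtrees are $\x$'s capped by a final $\<\x,\x>$; when $k'=n_k+1$, the last subtree $t_{k'}'$ lands in the distinguished position $y_{n_k+2}$ and pairs with a fresh variable into $\<t_{k'}',\x>$; when $k'\ge n_k+2$, two genuine subtrees $t_{n_k+1}'$ and $t_{n_k+2}'$ collide inside $(y_{n_k+2}\,y_{n_k+3})$ and merge into $\<t_{n_k+1}',t_{n_k+2}'>$, while the later $t_i'$'s shift down by one. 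The last case is the most delicate, requiring careful tracking of $\beta$-reductions and variable renaming, but once it is verified the new spine decomposition agrees with the forest update rule verbatim in every case, and the induction closes.
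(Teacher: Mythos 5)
Your proposal is correct, and the overall skeleton (induction on $k$, base case from the explicit tree of $B^{n}B$, inductive step showing that appending one monomial realizes the forest merge at positions $n_k$ and $n_k+1$) matches the paper's. Where you diverge is in how the inductive step is justified. You carry out the $\beta$-normalization of $\lambda z.\,P'((B^{n_k}B)\,z)$ by hand and split into the three cases $k'\le n_k$, $k'=n_k+1$, $k'\ge n_k+2$ according to how the hanging subtrees of $P'$ fill the $n_k+2$ positional slots of $(B^{n_k}B)\,z$; this is essentially a re-derivation of the case analysis (i)--(iv) already performed inside the proof of \reflem{H-uniq}. The paper instead sidesteps all $\beta$-reduction at this point: it takes the candidate tree $t=\varphi(\underbrace{\x,\dots,\x}_{n_{k+1}},\<t_1,t_2>,t_3,\dots,t_m)$ produced by the forest merge, checks that $\nodes(t)=[n_1,\dots,n_k,n_{k+1}]$ whenever $\nodes(t')=[n_1,\dots,n_k]$ for the unmerged tree $t'$, and then invokes the uniqueness statement of \reflem{H-uniq} to conclude that $t$ \emph{must be} the tree of $(B^{n_1}B)\circ\dots\circ(B^{n_{k+1}}B)$. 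The paper's route is shorter because it reuses already-established machinery (at the mild cost of first normalizing the exponents to decreasing order, which is harmless since $f$ and the forest representation are invariant under the defining relations); your route is more self-contained and does not need the decreasing assumption, but it duplicates work and requires you to verify in each case that the term you produce is genuinely the $\beta\eta$-normal form (head variable $z$, already-normal arguments, no spurious $\eta$-redex in the trailing abstractions) — a detail you should make explicit when writing it up.
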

\begin{proof}
We prove the theorem by induction on $k$.
For binary trees $t_1,t_2,\dots,t_m$, we write $\varphi(t_1,t_2,\dots,t_m)$ for the binary tree $\<\dots\<\<\x,t_1>,t_2>,\dots,t_m>$.
Since the binary tree corresponding to the $\lambda$-term of $B^nB$ is given by $\varphi(\underbrace{\x,\dots,\x}_n,\<\x,\x>)$,
the statement holds for the binary forest representations of $x_n = f(B^nB)$.
Suppose $n_1 \ge \dots \ge n_k \ge n_{k+1}$.
Then, the binary forest representation of $x_{n_1}^{-1}\dots x_{n_k}^{-1}x_{n_{k+1}}^{-1}$ is in the form of
$(\underbrace{\x,\dots,\x}_{n_{k+1}},\<t_1,t_2>,t_3,\dots,t_m,\x,\dots)$.
The binary tree $t = \varphi(\underbrace{\x,\dots,\x}_{n_{k+1}},\<t_1,t_2>,t_3,\dots,t_m)$ satisfies
$\mathcal{L}(t) = [n_1,\dots,n_k,n_{k+1}]$ if the binary tree
$t' = \varphi(\underbrace{\x,\dots,\x}_{n_{k+1}}, t_1, t_2,t_3,\dots,t_m)$ satisfies $\mathcal{L}(t') = [n_1,\dots,n_k]$.
By \reflem{H-uniq}, $t$ is the binary tree corresponding to the $\lambda$-term of $(B^{n_1}B)\circ\dots\circ (B^{n_{k+1}}B)$,
and this implies the desired result.
\end{proof}



\section{Results on the $\rho$-property of $B$-terms}
\labsec{results}
%
In this section
we show several approaches to if- and only-if-parts of 
\refcnj{B-conj}
for their special cases.
%
For $B$-terms having the $\rho$-property,
we introduce an efficient implementation 
to compute the entry point and the size of the cycle.
For $B$-terms not having the $\rho$-property,
we give two methods for proving that they do not have it.
%
%
%

\subsection{$B$-terms having the $\rho$-property}
\labsec{rhob6b}
As shown in~\refsec{prelim},
we can check that
$B$-terms equivalent to $B^n B$ with $n\leq 6$
have the $\rho$-property
by computing $\sapp{(B^n B)}{i}$ for each $i$.
However, it is not easy to check it by computer
without an efficient implementation
because we should compute all $\sapp{(B^6 B)}{i}$ with
$i\leq 2980054085040~ (= 2641033883877+339020201163)$
to know $\rho(B^6 B)=(2641033883877,339020201163)$.
A naive implementation
which computes terms of $\sapp{(B^6 B)}{i}$ for all $i$
and stores all of them
has no hope to detect the $\rho$-property.

We introduce an efficient procedure
to find the $\rho$-property of $B$-terms
which can successfully compute $\rho(B^6 B)$.
The procedure is based on two orthogonal ideas,
Floyd's cycle-finding algorithm~\cite{Knuth69book}
and
an efficient right application algorithm over decreasing polynomials
presented in~\refsec{canonical}.

The first idea, Floyd's cycle-finding algorithm
(also called the tortoise and the hare algorithm),
enables us to detect the cycle with constant memory usage,
that is, the history of all terms $\sapp{X}{i}$ does not
need to be stored to check the $\rho$-property of the $X$ combinator.
The key to this algorithm is the fact that
there are two distinct integers $i$ and $j$ with $\sapp{X}{i}=\sapp{X}{j}$
if and only if 
there is an integer $m$ with $\sapp{X}{m}=\sapp{X}{2m}$,
where the latter requires to compare $\sapp{X}{i}$ and $\sapp{X}{2i}$
from smaller $i$ and store only these two terms
for the next comparison between
$\sapp{X}{i+1}=\sapp{X}{i}X$
and $\sapp{X}{2i+2}=\sapp{X}{2i}XX$
when $\sapp{X}{i}\neq\sapp{X}{2i}$.
The following procedure computes the entry point and the size of the cycle
if $X$ has the $\rho$-property.
\begin{enumerate}
\item Find the smallest $m$ such that $\sapp{X}{m}=\sapp{X}{2m}$.
\item Find the smallest $k$ such that $\sapp{X}{k}=\sapp{X}{m+k}$.
\item Find the smallest $0<c\leq k$ such that $\sapp{X}{m}=\sapp{X}{m+c}$.
If not found, put $c=m$.
\end{enumerate}
After this procedure, we find $\rho(X)=(k,c)$.
The third step can be run in parallel during the second one.
See~\cite[exercise 3.1.6]{Knuth69book} for the detail.
Although we have tried the other cycle detection algorithm
developed by Brent~\cite{Brent80bit} and Gosper~\cite[item 132]{Beeler72mit},
they show a similar performance.

Efficient cycle-finding algorithms do not suffice to compute $\rho(B^6 B)$.
Only with the idea above
running on a laptop 
(2.7 GHz Intel Core i7 / 16GB of memory),
it takes about 2 hours even for $\rho(B^5 B)$ and
fails to compute $\rho(B^6 B)$ with an out-of-memory error.

The second idea
enables us to compute $\sapp{X}{i+1}$ efficiently from $\sapp{X}{i}$
for $B$-terms $X$.
The key to this algorithm is to use the canonical representation
of $\sapp{X}{i}$, that is a decreasing polynomial,
and directly compute the canonical representation of $\sapp{X}{i+1}$
from that of $\sapp{X}{i}$.
Additionally, the canonical representation enables us
to quickly decide equivalence
which is required many times to find the cycle.
It takes time just proportional to their lengths.
If the $\lambda$-terms are used for finding the cycle,
both application and deciding equivalence
require much more complicated computation.
Our implementation based on these two ideas 
computes $\rho(B^5 B)$ and $\rho(B^6 B)$
in 2 minutes and 6 days, respectively.  


%
For two given decreasing polynomials $P_1$ and $P_2$,
we show how a decreasing polynomial $P$ equivalent to $(P_1~P_2)$
can be obtained.
The method is based on the following lemma
about an application of one $B$-term to another $B$-term.
%
%
\begin{lem}
\lablem{comp-app}
For $B$-terms $e_1$ and $e_2$,
there exists $k\geq0$ such that
$e_1\circ(B~ e_2) = B~ (e_1~ e_2)\circ B^k$.
\end{lem}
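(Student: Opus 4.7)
My plan is to prove this by structural induction on $e_1$, using the restricted grammar $e ::= B \mid B~e \mid e \circ e$ from \refeqn{syntaxBo}. The argument $e_2$ remains an arbitrary $B$-term throughout, but in the induction step I will have to instantiate the induction hypothesis on a different second argument. Throughout I read $B^0$ as the identity, so that $X \circ B^0 = X$ holds up to $\beta\eta$ in the case $k=0$.

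The two non-recursive cases come directly from the equational axioms already established in the paper. If $e_1 = B$, then equation \refeqn{Bo-push} gives $B \circ (B~e_2) = B(B~e_2) \circ B$, and since $e_1~e_2 = B~e_2$, the value $k = 1$ works. If $e_1 = B~e_1'$, then equation \refeqn{Bo-distr} gives $(B~e_1') \circ (B~e_2) = B(e_1' \circ e_2)$, and since $e_1' \circ e_2 = B~e_1'~e_2 = e_1~e_2$, the right-hand side is already $B(e_1~e_2)$, so $k=0$ suffices.

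For the composition case $e_1 = e_1' \circ e_1''$, I would first use associativity of $\circ$, which (as the paper remarks just after the proof of \refthr{eqB}) follows from (B1) and (B2), to rewrite $e_1 \circ (B~e_2) = e_1' \circ (e_1'' \circ (B~e_2))$. Applying the induction hypothesis to $e_1''$ with second argument $e_2$ yields some $k_1$ with $e_1'' \circ (B~e_2) = B(e_1''~e_2) \circ B^{k_1}$. Substituting and reassociating, the expression becomes $(e_1' \circ B(e_1''~e_2)) \circ B^{k_1}$. I would then apply the induction hypothesis a second time, this time to $e_1'$ with the $B$-term $e_1''~e_2$ as second argument, producing some $k_2$ with $e_1' \circ B(e_1''~e_2) = B(e_1'(e_1''~e_2)) \circ B^{k_2}$. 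Chaining and using (B1) to rewrite $e_1'(e_1''~e_2) = (e_1' \circ e_1'')~e_2 = e_1~e_2$, I conclude that $k = k_1 + k_2$ works.

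I do not anticipate any deep obstacle: the main bookkeeping points are checking that $e_1''~e_2$ really is a $B$-term (so the second invocation of the hypothesis is legitimate, since the statement is about $B$-terms) and that $B^{k_1} \circ B^{k_2} = B^{k_1+k_2}$, which is immediate from the definition of $e^n$ as $n$-fold composition. A minor notational point to clarify in the write-up is the reading of $B^0$ as the identity $\lambda$-term in the $k=0$ case, since the identity itself is not a $B$-term but the equation is intended up to $\beta\eta$-equivalence.
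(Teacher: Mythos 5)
Your proof is correct, but it takes a genuinely different route from the paper's. The paper first passes to the decreasing-polynomial canonical form of $e_1$ and does a two-way case analysis on its maximum degree: if the maximum degree is $0$ then $P_1=B^{k'}$ and $k'$ applications of (B3$'$) finish the job; otherwise $P_1=(B\,P')\circ B^{k'}$ and one application of (B2$'$) after shuffling the trailing $B$'s does. Your argument instead proceeds by structural induction over the raw grammar $e ::= B \mid B\,e \mid e\circ e$, handling the two base shapes with (B3$'$) and (B2$'$) respectively and the composite case by two nested invocations of the induction hypothesis (the second with the new argument $e_1''\,e_2$, which is legitimately a $B$-term since $\CL{B}$ is closed under application) plus associativity and (B1). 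What the paper's version buys is a concrete handle on the resulting exponent: $k$ is exactly the number of trailing degree-$0$ monomials of $P_1$, which is precisely what step (3) of Algorithm~\ref{alg:appcanon} strips off, so the lemma directly certifies the application algorithm. What your version buys is independence from Lemmas~\ref{lem:H-exist} and~\ref{lem:H-decr}: it needs only the equational axioms and the reduction to syntax~\refeqn{syntaxBo}, and it yields an explicit additive recursion $k=k_1+k_2$. Both are fine since the statement only asserts existence of some $k\geq 0$; your remark that $B^0$ must be read as the identity up to $\beta\eta$ in the $k=0$ case is a real point, but the paper's own proof makes the same implicit convention when it writes $P_1=(B\,P')\circ B^{k'}$ with $k'\geq 0$.
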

\begin{proof}
Let $P_1$ be a decreasing polynomial equivalent to $e_1$.
We prove the statement by case analysis on the maximum degree in $P_1$.
When the maximum degree is 0, 
we can take $k'\geq1$ such that
$P_1\equiv\underbrace{B\circ\dots\circ B}_{k'}=B^{k'}$.
Then,
\begin{align*}
e_1\circ(B\, e_2)&= 
\underbrace{B\circ\dots\circ B}_{k'}
\circ(B~ e_2) 
= (B^{k'+1}\, e_2)\circ\underbrace{B\circ\dots\circ B}_{k'}
= B\, (e_1\, e_2)\circ B^{k'}
\end{align*}
where equation~\refeqn{Bo-push} is used $k'$ times in the second equation.
Therefore the statement holds by taking $k=k'$.
When the maximum degree is greater than 0,
we can take a decreasing polynomial $P'$ for a $B$-term and $k'\geq0$ such that
$P_1=(B\, P')\circ \underbrace{B\circ\dots\circ B}_{k'}=
(B\, P')\circ B^{k'}$ due to equation~\refeqn{Bo-distr}.
Then,
\begin{align*}
e_1\circ(B\, e_2)&= 
(B\, P')\circ\underbrace{B\circ\dots\circ B}_{k'}\circ(B\, e_2)
\\&=
(B\, P')\circ(B^{k'+1}\, e_2)\circ\underbrace{B\circ\dots\circ B}_{k'}
\\&=
B~ (P'\circ(B^{k'}\, e_2))\circ B^{k'}
\\&=
B~ (B\,P'\,(B^{k'}\, e_2))\circ B^{k'}
\\&=
B~ (P_1\, e_2)\circ B^{k'}
\\&
=
B~ (e_1\, e_2)\circ B^{k'}\text.
\end{align*}
Therefore, the statement holds by taking $k=k'$.
\end{proof}

This lemma indicates that,
from two decreasing polynomials $P_1$ and $P_2$,
a decreasing polynomial $P$ equivalent to $(P_1\,P_2)$
can be obtained in the following steps
where $L_1$ and $L_2$ are lists of non-negative numbers
as shown in \refsec{canonical} corresponding to $P_1$ and $P_2$.
\begin{algoC}[\textbf{(Application of \(B\)-terms \(P_1\) and \(P_2\) in canonical representation)}]
\labalg{appcanon}\leavevmode
\begin{enumerate}
\item Build $P_2'$ by raising each degree of $P_2$ by 1,
\ie,~when $P_2\equiv (B^{n_1}\, B)\circ\dots\circ(B^{n_l}\, B)$,
$P_2' \equiv(B^{n_1+1}\, B)\circ\dots\circ(B^{n_l+1}\, B)$.
In terms of the list representation,
a list $L_2'$ is built from $L_2$ by incrementing each element.
\item Find a decreasing polynomial $P_{12}$
corresponding to $P_1\circ P_2'$ by equation~\refeqn{B-swap}.
In terms of the list representation,
a list $L_{12}$ is constructed
by appending $L_1$ and $L_2'$ and
repeatedly applying~\refeqn{B-swap}.
%
\item Obtain $P$ by lowering each degree of $P_{12}$ after
eliminating the trailing 0-degree units,
i.e.,~when $P_{12}\equiv (B^{n_1}\, B)\circ\dots\circ(B^{n_l}\, B)
\circ(B^{0}\, B)\circ\dots\circ(B^{0}\, B)$ with $n_1\geq\dots\geq n_l>0$,
$P\equiv(B^{n_1-1}\, B)\circ\dots\circ(B^{n_l-1}\, B)$.
In terms of the list representation,
a list $L$ is obtained from $L_{12}$
by decrementing each element after removing trailing 0's.
\end{enumerate}
\end{algoC}
In the first step,
a decreasing polynomial $P_2'$ equivalent to $B\,P_2$
is obtained.
The second step yields
a decreasing polynomial $P_{12}$
for $P_1\circ P_2'=P_1\circ(B\,P_2)$.
Since $P_1$ and $P_2$ are decreasing,
it is easy to find $P_{12}$
by repetitive application of equation~\refeqn{B-swap}
for each unit of $P_2'$,
{\`a}~la insertion operation in insertion sort.
In the final step,
a polynomial $P$ that satisfies
$(B\, P)\circ B^{k} = P_{12}$ with some $k$ is obtained.
From \reflem{comp-app} and the degree of decreasing polynomials, 
$P$ is equivalent to $(P_1\,P_2)$.

\begin{exa}
Let $P_1$ and $P_2$ be decreasing polynomials represented by lists
$L_1 = [4, 1, 0]$ and $L_2 = [2, 0]$.
Then a decreasing polynomial $P$ equivalent to $(P_1\,P_2)$
is obtained as a list $L$ in three steps:
\begin{enumerate}
\item A list $L_2'= [3, 1]$ is obtained from $L_2$.
\item
A decreasing list $L_{12}$ is obtained by
\begin{align*}
L_{12}
&= [4,1,\underline{0,3},1]
= [4,\underline{1,4},0,1] 
= [\underline{4,5},1,0,1]
= [6,4,1,\underline{0,1}]
= [6,4,\underline{1,2},0]
= [6,4,3,1,0]
\end{align*}
%
where equation~\refeqn{B-swap} is applied in each underlined pair.
\item
A list $L=[5,3,2,0]$ is obtained
from $L_{12}$
as the result of the application
by decrementing each element after removing trailing 0's.
\end{enumerate}
\end{exa}


The implementation based on the right application over decreasing polynomials is available at
\url{https://github.com/ksk/Rho} as a program named \texttt{bpoly}.
In the current implementation,
every decreasing polynomial is represented by a list
(simulated by an array with offset and live length)
whose $k$-th element stores the number of occurrences of $(B^k B)$.
For example, \((B^3 B)\circ(B^2 B)\circ(B^2 B)\circ(B^0 B)\circ(B^0 B)\) is
represented by a list \verb|[2,0,2,1]| where the 0-th element is the leftmost \verb|2|.
%
Since \((B^k B)^m\) is equivalent to \(B^k B^m\), 
this representation can be seen
as a variant of Curry's normal form mentioned in~\refsec{canonical}
by inserting the identity function \(B^k B^0\) for each skipped degree \(k\)
(\eg, \((B^3 B^1)\circ(B^2 B^2)\circ(B^1 B^0)\circ(B^0 B^2)\) for the above).
Using the Curry's normal form,
we can adopt a slightly-improved algorithm
by equation~\(B^{n} B^{m}\circ B^{n'} B^{m'} = 
B^{n'+m} B^{m'}\circ B^{n} B^{m}\) if \(n<n'\)
at the step (2) in \refalg{appcanon}.
%
%
Regarding cycle detection of the implemenation, 
Floyd's, Brent's and Gosper's algorithms are used.
Note that the program does not terminate for the combinator
which does not have the $\rho$-property.
It will not help to decide if a combinator has the $\rho$-property.
One might observe how the terms grow by repetitive right applications
through running the program, though.

\subsection{$B$-terms not having the $\rho$-property}
\labsec{antib2n}
%
%

A computer can check that a $B$-term has the $\rho$-property
just by calculation
but cannot show that a $B$-term does not have the $\rho$-property.
In this subsection, 
we present two methods to prove that specific $B$-terms do not have the $\rho$-property.
One employs decreasing polynomial representation as previously discussed
and the other makes use of tree grammars for binary tree representation.

\subsubsection{Using polynomial representation}
We show that $B^2$ does not have $\rho$-property as an experiment.
Note that $B^2$ has the decreasing polynomial representation
$(B^0 B)\circ(B^0 B)$ which has the smallest length, 2, among the $B$-terms that are
expected not to have the $\rho$-property.
%
%

To disprove the $\rho$-property of $B^2$,
we show
the following lemmas about the regularity of decreasing polynomial representation
of $\sapp{B^2}{i}$ for certain $i$.
In these statements, we use
\begin{align*}
t_m&=\frac{m^2+m}2
&&\text{and}
&
\bigodot_{i=k}^{n} f_i &=
f_{k}\circ f_{k+1}\circ f_{k+2}\circ\dots\circ f_{n-1}\circ f_{n}\text.
\end{align*}
In particular, $\bigodot_{i=k}^{n} f_i$ is an identity function if $k>n$.

\begin{lem}
\lablem{b2slope}
For any $k$ and $m$ with $0\leq k\leq m$ and $l>0$,
\begin{align}
\bigodot_{i=k}^{m}(B^{m-i}\, B)^2
\circ
(B^l\,B)^2
&=
(B^{2m-2k+l+2}\, B)^2\circ
\bigodot_{i=k}^{m}(B^{m-i}\, B)^2
\labeqn{b2slope}
\end{align}
holds.
\end{lem}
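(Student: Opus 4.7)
The plan is to prove the identity by downward induction on $k$, sliding $(B^l B)^2$ to the left past each factor $(B^{m-i} B)^2$ in turn, using only equation~\refeqn{B-swap}. Before starting the induction I would record the following auxiliary ``pair-swap'' fact: whenever $a<b$,
\[
(B^a B)^2 \circ (B^b B)^2 \;=\; (B^{b+2} B)^2 \circ (B^a B)^2.
\]
This is established by four successive applications of~\refeqn{B-swap}: swap the inner middle pair (turning the first $B^b B$ into $B^{b+1} B$), then the front pair (producing $B^{b+2} B$), then the back pair ($B^{b+1} B$ again), and finally the middle pair once more ($B^{b+2} B$); after these four rewrites the two new heads of degree $b+2$ sit adjacent on the left and the two tails of degree $a$ sit adjacent on the right.

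The base case of the main induction is $k=m$, where the claim becomes $(B^0 B)^2 \circ (B^l B)^2 = (B^{l+2} B)^2 \circ (B^0 B)^2$. This is exactly the auxiliary identity with $a=0$ and $b=l$; the side condition $a<b$ holds because $l>0$.

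For the inductive step, assume the statement holds for $k+1$ and pick any $k<m$. I would split the left-hand side as
\[
(B^{m-k} B)^2 \circ \Bigl(\bigodot_{i=k+1}^{m}(B^{m-i} B)^2 \circ (B^l B)^2\Bigr),
\]
apply the inductive hypothesis to the bracketed factor to rewrite it as $(B^{2m-2k+l}\,B)^2 \circ \bigodot_{i=k+1}^{m}(B^{m-i} B)^2$, and then invoke the auxiliary identity with $a=m-k$ and $b=2m-2k+l$ (the inequality $a<b$ holds because $l>0$) to push $(B^{m-k} B)^2$ past the new head, raising its degree by~$2$ to $2m-2k+l+2$. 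Re-absorbing $(B^{m-k} B)^2$ at the left end of the $\bigodot$-product (it is the $i=k$ term of $\bigodot_{i=k}^{m}$) yields the right-hand side.

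The only real obstacle is clerical bookkeeping: checking that the ``$+2$'' in the exponent really does come from two nested applications of~\refeqn{B-swap} inside each side of the pair (hence four swaps in total for the auxiliary fact), and being careful about the convention that extending $\bigodot_{i=k+1}^{m}$ to $\bigodot_{i=k}^{m}$ prepends $(B^{m-k} B)^2$ on the \emph{left}. No equational reasoning beyond~\refeqn{B-swap} itself is needed.
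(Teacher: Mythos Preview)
Your proof is correct and is essentially the same approach as the paper's: the paper simply says the identity follows from applying equation~\refeqn{B-swap} exactly $4(m-k+1)$ times, and your induction with the auxiliary pair-swap fact (four swaps per step, $m-k+1$ steps) is precisely a careful unpacking of that count. The organization, the side-condition checks, and the bookkeeping about how $\bigodot$ extends on the left are all accurate.
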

\begin{proof}
This statement can be obtained
by applying equation~\refeqn{B-swap}
for $4(m-k+1)$ times.
\end{proof}

\begin{lem}
\lablem{b2general}
For any $m\geq1$ and $0\leq j\leq m$,
\begin{equation}
\sapp{B^2}{t_m+j}
=
\bigodot_{i=1}^{j} (B^{2m-i-j+2}\,B)^2
\circ
\bigodot_{i=j+1}^{m}(B^{m-i}\,B)^2
\labeqn{b2general}
\end{equation}
holds.
\end{lem}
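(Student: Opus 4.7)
The plan is to prove this by induction on the pair $(m,j)$ using the transitions $(m,j)\mapsto(m,j+1)$ when $j<m$ and $(m,m)\mapsto(m+1,0)$, which together enumerate every valid pair; this is equivalent to induction on the total index $t_m+j$. The base case $(1,0)$ is immediate, since $\sapp{B^2}{1}=B^2$ has decreasing polynomial $(B^0\,B)^2$, which equals $\bigodot_{i=1}^{1}(B^{1-i}\,B)^2$.

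For the inductive step I compute $\sapp{B^2}{t_m+j+1}=\sapp{B^2}{t_m+j}\cdot B^2$ using \refalg{appcanon}. Since $B^2$ has canonical representation with degree list $[0,0]$, the algorithm raises it to $[1,1]$ and appends $(B^1\,B)^2$ at the right end of the polynomial supplied by the inductive hypothesis. In the case $j<m$, the tail of that polynomial is exactly the block $\bigodot_{i=j+1}^{m}(B^{m-i}\,B)^2$, so \reflem{b2slope} applies with $k=j+1$ and $l=1$, rewriting the suffix $\bigodot_{i=j+1}^{m}(B^{m-i}\,B)^2\circ(B^1\,B)^2$ into $(B^{2m-2j+1}\,B)^2\circ\bigodot_{i=j+1}^{m}(B^{m-i}\,B)^2$. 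The full list remains decreasing because the two new boundaries satisfy $2m-2j+2>2m-2j+1>m-j-1$. Removing the trailing $(B^0\,B)^2$ (coming from $i=m$) and decrementing every remaining degree by one then yields, after the reindexing $i'=i+1$ on the tail, exactly the formula for $(m,j+1)$. In the case $j=m$, the polynomial for $(m,m)$ has smallest degree $2$, so appending $(B^1\,B)^2$ already produces a decreasing polynomial with no trailing zeros; a single decrement step then gives $\bigodot_{i=1}^{m+1}(B^{(m+1)-i}\,B)^2$, which is the formula for $(m+1,0)$.

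The main obstacle is the index bookkeeping when combining the old first product, the inserted monomial $(B^{2m-2j+1}\,B)^2$, and the truncated second product. The crucial sanity check is that after decrementing, the inserted monomial has degree $2m-2j$, which coincides with the $(j+1)$-th factor of the new first product, since $2m-(j+1)-(j+1)+2=2m-2j$. No other subtleties arise: the hypothesis $l>0$ of \reflem{b2slope} is automatic because $l=1$ throughout, and the constraint $0\le k\le m$ reduces to $0\le j+1\le m$, which is precisely the defining condition of Case~1.
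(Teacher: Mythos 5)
Your proof is correct and is essentially the paper's argument: the paper organizes the same two transitions $(m,j)\to(m,j+1)$ and $(m,m)\to(m+1,0)$ as a nested induction (outer on $m$, inner on $j$), while you linearize them into a single induction on $t_m+j$, but the computational content — appending the lifted $[1,1]$ via \refalg{appcanon}, invoking \reflem{b2slope} with $l=1$ to sort, then stripping trailing zeros and decrementing — is identical, and your boundary checks ($2m-2j+2>2m-2j+1>m-j-1$, smallest degree $2$ in the $j=m$ case) are the right ones.
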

\begin{proof}
We prove the statement by induction on $m$.
In the case of $m=1$, $t_m=1$.
When $j=0$, equation~\refeqn{b2general} is clear.
When $j=1$, equation~\refeqn{b2general} is shown by
\begin{align*}
\sapp{B^2}{2}
&=((B^0~B)\circ(B^0~B))~((B^0~B)\circ(B^0~B)) 
\\&= (B^2~B)\circ(B^2~B) = (B^2~B)^2
\end{align*}
by the application procedure over decreasing polynomial representation.

For the step case,
we show that if equation~\refeqn{b2general} holds for $m=k\geq1$
and $0\leq j\leq k$,
then it also holds for $m=k+1$ and $0\leq j\leq k+1$.
It is proved by induction on $j$ where $k$ is fixed.
When $j=0$, 
from the outer induction hypothesis, we obtain
\begin{align*}
\sapp{B^2}{t_{k+1}}
&=\sapp{B^2}{t_{k}+k+1}
\\&=\sapp{B^2}{t_{k}+k}\,B^2
\\&=
\paren{\bigodot_{i=1}^{k} (B^{2k-i-k+2}\,B)^2}
~\paren{\paren{B^0\,B}\circ\paren{B^0\,B}}
\\&=
\bigodot_{i=1}^{k} (B^{k-i+1}\,B)^2
\circ
\paren{B^0\,B}\circ\paren{B^0\,B}
\\&=
\bigodot_{i=1}^{k+1} (B^{(k+1)-i}\,B)^2
\end{align*}
by applying the application procedure over decreasing polynomial representations,
hence the statement holds for $j=0$.
When $0<j\leq k+1$, from the inner induction hypothesis and \reflem{b2slope}, we similarly obtain
\begin{align*}
\sapp{B^2}{t_{k+1}+j}
&=
\sapp{B^2}{t_{k+1}+j-1}\,B^2
\\&=
\paren{\bigodot_{i=1}^{j-1} (B^{2k-i-j+5}\,B)^2
\circ
\bigodot_{i=j}^{k+1}(B^{k-i+1}\,B)^2
}
\paren{\paren{B^0\,B}\circ\paren{B^0\,B}}
\\&=
\bigodot_{i=1}^{j-1} (B^{2k-i-j+4}\,B)^2
\circ
\bigodot_{i=j}^{k}(B^{k-i}\,B)^2
\circ
\paren{B^2\,B}\circ\paren{B^2\,B}
\\&=
\bigodot_{i=1}^{j-1} (B^{2k-i-j+4}\,B)^2
\circ
\paren{B^{2k-2j+4}\,B}^2
\circ
\bigodot_{i=j}^{k}(B^{k-i}\,B)^2
\\&=
\bigodot_{i=1}^{j} (B^{2(k+1)-i-j+2}\,B)^2
\circ
\bigodot_{i=j+1}^{k+1}(B^{(k+1)-i}\,B)^2\text.
\end{align*}
Therefore, the statement holds for $m=k+1$.
\end{proof}

These lemmas immediately lead to the anti-$\rho$-property of $B^2$.
\begin{thm}
\labthr{b2anti}
The $B$-term $B^2$ does not have the $\rho$-property.
\end{thm}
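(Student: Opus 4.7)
The plan is to leverage \reflem{b2general} together with the uniqueness of the canonical representation (\refthr{canonical}). The core observation is that the length of the decreasing polynomial representing $\sapp{B^2}{n}$ grows without bound as $n$ increases, which precludes any cycle: a cycle would confine $\{\sapp{B^2}{n}\mid n\ge 1\}$ to a finite set, and a finite set of $B$-terms has bounded canonical-form length.

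First I would note that the triangular numbers $t_m = m(m+1)/2$ exhaust the positive integers, in the sense that every $n\ge 1$ admits a unique expression $n = t_m + j$ with $m\ge 1$ and $0\le j\le m$, because $t_{m+1} = t_m + (m+1)$. Thus \reflem{b2general} supplies an explicit formula for $\sapp{B^2}{n}$ for every $n\ge 1$.

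Second, I would count the monomial factors on the right-hand side of equation~\refeqn{b2general}. Each occurrence of $(B^k\,B)^2$ abbreviates $(B^k\,B)\circ (B^k\,B)$ and contributes exactly two monomials to the polynomial. The first big composition contains $j$ such factors, the second contains $m-j$, so the total length is always $2m$, independent of $j$. A quick check of the degree sequence, namely $2m-j+1,\,2m-j+1,\,2m-j,\,2m-j,\,\ldots,\,2m-2j+2,\,2m-2j+2,\,m-j-1,\,m-j-1,\,\ldots,\,0,\,0$, confirms that it is weakly decreasing: within each of the two products this is immediate, and the only nontrivial join inequality $2m-2j+2 \ge m-j-1$ holds because $m-j+3 > 0$. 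Hence the expression given by \reflem{b2general} is actually the canonical decreasing polynomial of $\sapp{B^2}{n}$.

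Finally, suppose toward contradiction that $B^2$ had the $\rho$-property. Then $\{\sapp{B^2}{n}\mid n\ge 1\}$ would be finite, so the lengths of the corresponding canonical polynomials would be bounded. Writing $n = t_m + j$ as above, that length is $2m$, which tends to infinity with $n$ — a contradiction. I expect the only real work to be the verification that the polynomial produced by \reflem{b2general} is genuinely decreasing, which is the single step that justifies invoking \refthr{canonical}; once this is granted, the length argument closes the proof immediately.
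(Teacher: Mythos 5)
Your proof is correct and follows essentially the same route as the paper: both argue by contradiction that a cycle would bound the length of the canonical decreasing polynomial, while \reflem{b2general} shows that length grows without bound along the triangular numbers. Your added checks (that every $n\ge 1$ is some $t_m+j$ with $0\le j\le m$, and that the polynomial in \refeqn{b2general} is genuinely decreasing so that \refthr{canonical} applies) are worthwhile details the paper leaves implicit.
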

\begin{proof}
We prove the statement by contradiction.
If $B^2$ has the $\rho$-property,
then the set of the normal forms of $S=\{\sapp{B^2}{i}\mid i>0\}$ is finite.
Hence we can take $m$ as the maximum length of decreasing polynomial representation
among all $B$-terms in $S$.
However, decreasing polynomial representation of $\sapp{B^2}{t_{m+1}}$ has length $m+1$
according to \reflem{b2general}.
This contradicts the assumption of $m$.
\end{proof}


\subsubsection{Using tree grammars}
Another way for disproving the $\rho$-property of $B$-terms is 
to consider the $\beta\eta$-normal form of their $\lambda$-terms.
As shown in~\refsec{checkeq},
the $\beta\eta$-normal form of a $B$-term can be regarded as a binary tree.
We can disprove the $\rho$-property of $B$-terms
by observing what happens on the binary trees during the repetitive right application.
More specifically,
we first find a set which is closed under the application of a given term,
and then show the length of the spine of trees is unbounded on the repetitive right application.
This leads to the anti-$\rho$-property of the term as shown in \refthr{antibgeneral}.


First, we introduce some notations.
In this section, we write $\<\star,\star,\star,\ldots,\star>$ for the binary tree $\<\ldots\<\<\star,\star>,\star>,\ldots,\star>$ and identify $B$-terms with their corresponding binary trees.
For a binary tree $t=\langle\star,t_1,\dots,t_k\rangle$, we define $l(t) = \text{(the number of leaves in $t$)}$, $a(t) = k$, and $N_i(t) = t_i$ for $i=1,\dots,k$.
If $X'$ is a $B$-term, $l(X')$, $a(X')$, and $N_i(X')$ are defined to be $l(t)$, $a(t)$, and $N_i(t)$ for the binary tree $t$ corresponding to $X'$.
Suppose the $\beta\eta$-normal form of $X'$ is $\lambda x_1'\dotsLam{x'_{n'}}.\, x_1'~e_1~\dots~e_k$ and
let $X$ be another $B$-term whose $\beta\eta$-normal form is $\lambda x_1\dotsLam{x_{n}}.\, e$.
We can see $X'~ X = (\lambda x'_1\dotsLam{x'_{n'}}.\, x'_1~ e_1~ \cdots~ e_k)~ X = \lambda x'_2 \dotsLam {x'_{n'}}.\, X~ e_1~ \cdots~ e_k$
and from \reflem{bform}, its $\beta\eta$-normal form is
\[
  \begin{cases}
    \lambda x'_2\dotsLam{x'_{n'}}\dotLam{x_{k+1}}\dotsLam{x_{n}}.\, e[e_1/x_1, \ldots, e_k/x_k] & (k \le n)\\
    \lambda x'_2\dotsLam{x'_{n'}}.\, e[e_1/x_1, \dots, e_{n}/x_{n}]~ e_{n+1} ~ \cdots ~ e_k & (\text{otherwise}).
  \end{cases}
\]
Here $e[e_1/x_1,\dots,e_k/x_k]$ is the term which is obtained by substituting $e_1, \dots, e_k$ to the variables $x_1,\dots,x_k$ in $e$.

By simple computation with this fact, we get the following lemma:
\begin{lem}
\lablem{asymp}
Let $X$ and $X'$ be $B$-terms. Then
\begin{align}
  l(X'~X) &= l(X') -1 + \max\{l(X) - a(X'), \ 0\} \labeqn{leafeqn}\\
  a(X'~X) &= a(X) + a(N_1(X')) + \max\{a(X') - l(X),\  0\} \labeqn{argeqn}\\
  N_1(X'~X) &=
  \begin{cases}
    N_1(X)[N_2(X')/x_2,\ldots,N_m(X')/x_m] & (\text{if $N_1(X')$ is a leaf})\\
    N_1(N_1(X')) & (\text{otherwise})
  \end{cases} \labeqn{nexteqn}
\end{align}
where
$m = \min\{l(X),\ a(X')\}$.
\end{lem}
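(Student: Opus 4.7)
The plan is to unpack the two-case formula for the $\beta\eta$-normal form of $X'\,X$ stated immediately before the lemma, combined with \reflem{bform}. First I would fix canonical representatives $X' = \lambda x'_1\,\dots\,x'_{n'}.\, x'_1\,e_1\cdots e_k$ and $X = \lambda x_1\,\dots\,x_n.\, e$, so that by definition $l(X')=n'$, $l(X)=n$, $a(X')=k$, and $e_i = N_i(X')$. Applying \reflem{bform} to $X$ and the labeling procedure of \refsec{canonical}, I decompose $e = x_1\,f_1\cdots f_m$ with $m=a(X)$, $f_i=N_i(X)$, and record the crucial fact that the subtree $f_1$ occupies leaves $2,\dots,l(f_1)+1$ of $X$, so its free variables are contained in $\{x_2,\dots,x_{l(f_1)+1}\}$; in particular $x_1\notin FV(f_1)$, which will be used to discard vacuous substitutions.

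Next, I would split on whether $k\le n$ or $k>n$ and plug each case into the displayed formula for the normal form of $X'\,X$. In both cases, counting the remaining $\lambda$-binders in the body yields \refeqn{leafeqn}, with $\max\{l(X)-a(X'),0\}$ exactly discriminating the two cases. For \refeqn{argeqn}, I would sub-case on whether $N_1(X')$ is a leaf: if $N_1(X')=x'_j$, the head of the new body is still $x'_j$ and its spine is $f_1',\dots,f_m'$ (followed in the $k>n$ case by $e_{n+1},\dots,e_k$); if $N_1(X')=\langle\star,h_1,\dots,h_{k'}\rangle$, then substituting it for $x_1$ lifts $h_1,\dots,h_{k'}$ into the spine ahead of the $f_i'$'s (and of the trailing $e_j$'s in the $k>n$ case). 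A direct count of spine elements in each subcase produces the right-hand side of \refeqn{argeqn}, with $a(N_1(X'))$ being $0$ or $k'$ respectively.

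For \refeqn{nexteqn} I would split once more on whether $N_1(X')$ is a leaf. In the leaf case, the first argument of $X'\,X$ is $f_1' = f_1[e_1/x_1,\dots,e_k/x_k]$ (or the analogous substitution up to $x_n$ in Case~2); using $x_1\notin FV(f_1)$ and $FV(f_1)\subseteq\{x_2,\dots,x_{l(f_1)+1}\}$, the outer substitution restricts to $f_1[e_2/x_2,\dots,e_m/x_m]$ with $m=\min\{l(X),a(X')\}$, which is exactly the stated formula. In the non-leaf case, $h_1$ is a subterm of $X'$ and contains none of $x_1,\dots,x_n$, so no substitution reaches it and $N_1(X'\,X)=h_1=N_1(N_1(X'))$. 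The main obstacle is this substitution bookkeeping: one has to confirm that the apparently broader outer substitution truly restricts to the one displayed in the lemma, and this is settled by the free-variable bound on $f_1$ together with the $\min$ in the definition of $m$; everything else is elementary counting.
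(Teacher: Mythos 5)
Your proposal is correct and follows essentially the same route as the paper, which simply asserts the lemma "by simple computation" from the displayed two-case normal form of $X'~X$; your write-up just supplies the binder/spine counting and the free-variable bookkeeping that the paper leaves implicit.
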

From this lemma, we obtain a key theorem to prove the anti-$\rho$-property.
\begin{thm}
\labthr{antibgeneral}
Let $X$ be a $B$-term and $T$ be a set of $B$-terms.
If $\left\{ X_{(i)} \ \middle|\  i \ge 1\right\} \subset T$ and
$l(X) - a(X') \ge 1$
for any $X' \in T$,
then $X$ does not have the $\rho$-property.
\end{thm}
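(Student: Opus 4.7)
The plan is to show that under the hypothesis the leaf count \(l(X_{(i)})\) grows unboundedly as \(i\to\infty\); since having the \(\rho\)-property would force \(\{X_{(i)}\mid i\geq 1\}\) to be finite (and hence \(l(X_{(i)})\) bounded), this will yield a contradiction. The main ingredients are the three equations of \reflem{asymp}.

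First, applying~\refeqn{leafeqn} with \(X'=X_{(i)}\in T\), the hypothesis gives \(l(X)-a(X_{(i)})\geq 1\), so the \(\max\) is \(l(X)-a(X_{(i)})\) and
\[
  l(X_{(i+1)}) \;=\; l(X_{(i)}) + \bigl(l(X)-1-a(X_{(i)})\bigr) \;\geq\; l(X_{(i)}),
\]
with equality iff \(a(X_{(i)})=l(X)-1\). So \((l(X_{(i)}))_i\) is non-decreasing. Suppose for contradiction that \(X\) has the \(\rho\)-property; then \(\{X_{(i)}\}\) is finite up to \(\beta\eta\)-equivalence, so \(l(X_{(i)})\) is bounded and hence eventually constant, say \(l(X_{(i)})=L\) for all \(i\geq i_0\). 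Then \(a(X_{(i)})=l(X)-1\) for \(i\geq i_0\), and plugging into~\refeqn{argeqn} (whose \(\max\) vanishes since \(a(X_{(i)})<l(X)\)) yields \(a(N_1(X_{(i)}))=l(X)-1-a(X)=:c\) for every \(i\geq i_0\).

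The contradiction now splits on \(c\). If \(c=0\), then \(a(X)=l(X)-1\); writing \(X\) in the shape of \reflem{bform} as \(\lambda x_1\ldots x_{l(X)}.\,x_1\,e_1\cdots e_{l(X)-1}\), the total leaf count forces each \(e_j\) to be a single variable, and condition~(2) then fixes \(e_j=x_{j+1}\). Hence \(X\) is \(\eta\)-equivalent to \(\lambda x.\,x\), whose \(\beta\eta\)-normal form has only one lambda, contradicting the clause \(k>2\) of \reflem{bform}. If \(c\geq 1\), then \(N_1(X_{(i)})\neq\x\) for every \(i\geq i_0\), placing us in the ``otherwise'' branch of~\refeqn{nexteqn} at every step; iterating gives
\[
  N_1(X_{(i_0+j)}) \;=\; N_1^{\,j+1}(X_{(i_0)}),
\]
a subtree of the fixed finite tree \(X_{(i_0)}\) at strictly greater left-depth each step. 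Eventually this subtree must be the leaf \(\x\), giving arity \(0\neq c\), a contradiction.

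The delicate step is the \(c=0\) case, where one must rule out \(B\)-terms of arity one less than their leaf count — this ultimately reflects that no \(B\)-term is \(\eta\)-equivalent to the identity combinator \(I\). The \(c\geq 1\) case is merely the elementary observation that the left-spine depth of any finite tree is finite.
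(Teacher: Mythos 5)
Your overall strategy coincides with the paper's: show $l(X_{(i)})$ is non-decreasing via \refeqn{leafeqn}; assume it stabilizes, deduce $a(X_{(i)})=l(X)-1$ and then $a(N_1(X_{(i)}))=l(X)-a(X)-1$ from \refeqn{argeqn}; and derive a contradiction by iterating $N_1$ down the left spine of the fixed finite tree. Your $c\ge 1$ branch is exactly the paper's argument. The problem is the $c=0$ branch. The contradiction you invoke there --- that $X=_{\beta\eta}\lambda x.\,x$ would ``contradict the clause $k>2$ of \reflem{bform}'' --- does not go through: \reflem{bform} only asserts that $X$ is $\beta\eta$-equivalent to \emph{some} term $\lambda x_1\dotsLam{x_k}.\,M$ with $k>2$ satisfying (1) and (2), and the identity combinator is $\eta$-equivalent to $\lambda x_1\dotLam{x_2}\dotLam{x_3}.\,x_1\,x_2\,x_3$, which has precisely that shape with $k=3$. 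So the lemma is consistent with $X=_{\beta\eta}I$, and your $c=0$ case is not actually closed. (The underlying fact that no $B$-term is $\beta\eta$-equivalent to $I$ is true, but it requires a separate justification, e.g.\ via the uniqueness of decreasing polynomials in \refthr{canonical} or via principal types.)

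The paper closes this hole more directly, in a way that removes the case split altogether: $l$, $a$, and $N_i$ are read off the $\beta\eta$-normal form $\lambda x_1\dotsLam{x_{n}}.\,x_1\,e_1\cdots e_k$, and in an $\eta$-\emph{normal} term the last argument $e_k$ must contain at least two variables (if $e_k$ were the single final bound variable, occurring nowhere else since $B$-terms are linear, the term would $\eta$-reduce), while each of $e_1,\dots,e_{k-1}$ contains at least one. Hence $l(X)\ge k+2$, i.e.\ $l(X)-a(X)\ge 2$ for \emph{every} $B$-term, so $c=l(X)-a(X)-1\ge 1$ automatically and only your second branch ever occurs. Substituting this observation for your $c=0$ argument yields a correct proof that matches the paper's.
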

\begin{proof}
It suffices to show the following:
Under the hypotheses of the theorem, for any $i \ge 1$, there exists $j > i$ that satisfies 
$l(X_{(j)}) > l(X_{(i)})$.
Suppose, for contradiction, that there exists $i \ge 1$ that satisfies $l(X_{(i)}) = l(X_{(j)})$ for any $j > i$.
We get $a(X_{(j)}) = l(X)-1$ by \refeqn{leafeqn} and then $a(N_1(X_{(j-1)})) = l(X)-a(X)-1$ by \refeqn{argeqn}.
Here, $l(X) - a(X) \ge 2$ since if the $\beta\eta$-normal form of $X$ is $\lambda x'_1\dotsLam{x'_{n'}}.\, x'_1~e_1~\dots~e_k$, each $e_i$ ($i=1,\dots,k-1$) has at least one variable and $e_k$ has at least two variables because otherwise the $\lambda$-term is not $\eta$-normal.
Therefore $a(N_1(X_{(j-1)})) \ge 1$, so $N_1(X_{(j-1)})$ is not a leaf for any $j>i$.
From \refeqn{nexteqn}, we obtain $N_1(X_{(j-1)}) = N_1(N_1(X_{(j-2)})) = \cdots = \underbrace{N_1(\cdots N_1(}_{j-i}X_{(i)})\cdots)$ for any $j > i$.
However, this implies that $X_{(i)}$ has infinitely many variables and it yields contradiction.
\end{proof}

Using this theorem, we prove that the $B$-term $(B^k B)^{(k+2)n}$ does not have the $\rho$-property.
The $\beta\eta$-normal form of $(B^k B)^{(k+2)n}$ is given by
\[
  \lambda x_1\dotsLam{x_{k+(k+2)n+2}}.\, x_1~ x_2~ \cdots~ x_{k+1}~ (x_{k+2}~ x_{k+3~} \cdots~ x_{k+(k+2)n+2}).
\]
This is deduced from \reflem{H-uniq} since
the binary tree corresponding to the above $\lambda$-term is
$t = \langle\underbrace{\star,\dots,\star}_{k+1},\langle\underbrace{\star,\dots,\star}_{(k+2)n+1}\rangle\rangle$
and $\mathcal{L}(t) = [\underbrace{k,\ldots,k}_{(k+2)n}]$.
In particular, we get $l((B^k B)^{(k+2)n}) = k + (k+2)n + 2$.

To apply \refthr{antibgeneral},
we introduce a set $T_{k,n}$ which satisfies the hypotheses of \refthr{antibgeneral}.
First we inductively define a set of terms $T'_{k,n}$ as follows:
\begin{enumerate}
\item $\star \in T'_{k,n}$
\item $\langle\star,~ s_1,~ \ldots,~ s_{(k+2)n}\rangle \in T'_{k,n}$ if
$s_i=\star$ for a multiple $i$ of $k+2$
and $s_i\in T'_{k,n}$ for the others.
\end{enumerate}
Then we define $T_{k,n}$ by
$T_{k,n} = \left\{ \langle t_0,~ t_1,~ \ldots,~ t_{k+1}\rangle \ \middle| \ t_0,\ t_1, \ldots, t_{k+1} \in T'_{k,n} \right\}$.
Since the binary tree of $(B^k B)^{(k+2)n}$ is $\langle\underbrace{\star,\ldots,\star}_{k+1},\langle\star,\underbrace{\star,\ldots,\star}_{(k+2)n}\rangle\rangle$,
we can see $(B^k B)^{(k+2)n} \in T_{k,n}$.
Now we shall prove that $T_{k,n}$ is closed under right application of $(B^k B)^{(k+2)n}$.
\begin{lem}
\lablem{closed}
If $X \in T_{k,n}$ then $X~ (B^k B)^{(k+2)n} \in T_{k,n}$.
\end{lem}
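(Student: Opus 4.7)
The plan is to case-split on the shape of $t_0$ in the decomposition $X=\langle t_0,t_1,\ldots,t_{k+1}\rangle$: either $t_0=\star$, or $t_0$ has the nontrivial $T'_{k,n}$-form $\langle\star,s_1,\ldots,s_{(k+2)n}\rangle$. Writing $V$ for $(B^k B)^{(k+2)n}$, in each case I would spell out the $\beta\eta$-normal forms of $X$ and $V$ as $\lambda$-terms, substitute $V$ for the head variable of $X$ and reduce, then read off the binary tree of the result and check that its decomposition lies back in $T_{k,n}$.

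In the easy case $t_0=\star$, the flat form of $X$ has arity exactly $k+1$, so the reduction consumes precisely the first $k+1$ of $V$'s bound variables and leaves the parenthesised tail $(y_{k+2}\cdots y_{k+(k+2)n+2})$ intact. The resulting binary tree is $\langle t_1,t_2,\ldots,t_{k+1},\langle\star,\underbrace{\star,\ldots,\star}_{(k+2)n}\rangle\rangle$, which lies in $T_{k,n}$: the first $k+1$ components are the $t_i\in T'_{k,n}$ by hypothesis, and the last component is in $T'_{k,n}$ vacuously since every one of its entries is $\star$.

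The main obstacle is the case $t_0=\langle\star,s_1,\ldots,s_{(k+2)n}\rangle$. Now the flat form of $X$ has arity $(k+2)n+k+1=l(V)-1$, so applying $V$ consumes all but one of its bound variables, and after reduction the body is
\[
f_1\,f_2\cdots f_{k+1}\,\bigl(f_{k+2}\cdots f_{(k+2)n}\,g_1\cdots g_{k+1}\,z\bigr),
\]
where $f_i$ and $g_j$ are the $\lambda$-terms for $s_i$ and $t_j$ and $z$ is the surviving bound variable. No $\eta$-reduction of the outer $\lambda z$ is possible, since $z$ occurs only strictly inside the rightmost parenthesised group. Peeling off the $k+1$ rightmost children of the root decomposes the tree as $\langle s_1,s_2,\ldots,s_{k+1},R\rangle$ with
\[
R=\langle s_{k+2},s_{k+3},\ldots,s_{(k+2)n},t_1,\ldots,t_{k+1},\star\rangle.
\]
The first $k+1$ entries lie in $T'_{k,n}$ because none of $1,\ldots,k+1$ is a multiple of $k+2$. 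The crux will be verifying $R\in T'_{k,n}$: its head $s_{k+2}$ equals $\star$ because $k+2$ is a multiple of $k+2$; the multiples-of-$(k+2)$ positions inside $R$ other than its tail pull back to $s_{(j+1)(k+2)}$ for $j=1,\ldots,n-1$, each $\star$ by the same constraint; the trailing $\star$ sits at position $(k+2)n$, itself a multiple; and every other position is either an $s_i$ with $i$ not a multiple of $k+2$ or some $g_j=t_j$, all of which lie in $T'_{k,n}$ by hypothesis.
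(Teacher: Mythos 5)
Your proof is correct and follows essentially the same route as the paper's: the same case split on the shape of $t_0$, the same computation of the applied tree in each case, and the same decomposition $\langle s_1,\ldots,s_{k+1},R\rangle$ with $R=\langle s_{k+2},\ldots,s_{(k+2)n},t_1,\ldots,t_{k+1},\star\rangle$ in the main case. The only difference is that you spell out the positional bookkeeping showing $R\in T'_{k,n}$, which the paper compresses into ``we can easily see.''
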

\begin{proof}
From the definition of $T_{k,n}$, 
if $X \in T_{k,n}$ then $X$ can be written
in the form $\langle t_0,\ t_1,\ \dots,\ t_{k+1}\rangle$
for some $t_0, \ldots, t_{k+1} \in T'_{k,n}$.
In the case where $t_0 = \star$, we have $X~ (B^k B)^{(k+2)n} = \langle t_1,~\ldots,~ t_{k+1},~ \langle\star,~\underbrace{\star,~ \ldots,~ \star}_{(k+2)n}\rangle\rangle \in T_{k,n}$.
In the case where $t_0$ has the form of 2 in the definition of $T'_{k,n}$,
then we have
$X = \langle\star,~ s_1,~ \ldots,~ s_{(k+2)n},~ t_1,~ \ldots,~ t_{k+1}\rangle$
with $s_i=\star$ for a multiple $i$ of $k+2$
and $s_i\in T'_{k,n}$ for others,
hence
\[
  X~ (B^k B)^{(k+2)n} = \langle s_1,~ \ldots,~ s_{k+1},~ \langle s_{k+2},~ \ldots,~ s_{(k+2)n},~ t_1,~ \ldots,~ t_{k+1},~ \star\rangle\rangle.
\]
We can easily see $s_1,\ \ldots,\ s_{k+1}$, and $\langle s_{k+2},\ \ldots,\ s_{(k+2)n},\ t_1,\ \ldots,\ t_{k+1}, \ \star\rangle$ are in $T'_{k,n}$.
\end{proof}

From the definition of $T_{k,n}$,
we can compute that $a(X)$ equals $k+1$ or $(k+2)n + k + 1$
if $X\in T_{k,n}$.
Particularly, we get the following:
\begin{lem}
\lablem{targs}
For any $X \in T_{k,n}$, $a(X) \le (k+2)n + k + 1 = l((B^k B)^{(k+2)n}) - 1$.
\end{lem}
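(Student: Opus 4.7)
The plan is to do a case analysis on the shape of the leftmost child $t_0$ in the representation $X = \langle t_0, t_1, \ldots, t_{k+1}\rangle$ guaranteed by the definition of $T_{k,n}$. Since $t_0 \in T'_{k,n}$, by the inductive definition of $T'_{k,n}$ there are only two possible forms for $t_0$: either $t_0 = \star$, or $t_0 = \langle \star, s_1, \ldots, s_{(k+2)n}\rangle$ for some $s_i \in T'_{k,n} \cup \{\star\}$ satisfying the multiple-of-$(k+2)$ condition. In both cases, I will unfold the outer $\langle \cdot, \cdot, \ldots, \cdot \rangle$ notation (which stands for left-associated application) into the canonical $\langle \star, u_1, \ldots, u_m\rangle$ form and read off $a(X) = m$ directly.

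In the first case $t_0 = \star$, we have $X = \langle \star, t_1, \ldots, t_{k+1}\rangle$ immediately, so $a(X) = k+1 \le (k+2)n + k + 1$ (the inequality is trivial for $n \ge 1$, and the $n = 0$ case yields equality). In the second case, since $\langle \langle \star, s_1, \ldots, s_{(k+2)n}\rangle, t_1, \ldots, t_{k+1}\rangle$ is a left-associated application whose innermost head is $\star$, it flattens to $X = \langle \star, s_1, \ldots, s_{(k+2)n}, t_1, \ldots, t_{k+1}\rangle$, and hence $a(X) = (k+2)n + k + 1$. So the bound holds with equality in the second case and strictly (or with equality when $n=0$) in the first. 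The final identity $(k+2)n + k + 1 = l((B^k B)^{(k+2)n}) - 1$ follows from the already-computed formula $l((B^k B)^{(k+2)n}) = k + (k+2)n + 2$ just above the lemma statement.

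There is essentially no obstacle here; the only mild subtlety is making sure the notational unfolding in the second case is justified, i.e., that left-associated application $\langle \langle \star, s_1, \ldots, s_{(k+2)n}\rangle, t_1, \ldots, t_{k+1}\rangle$ genuinely coincides with $\langle \star, s_1, \ldots, s_{(k+2)n}, t_1, \ldots, t_{k+1}\rangle$ under the flat-tree convention introduced at the top of this subsection. Once that convention is invoked, the arity count is immediate from the definition $a(\langle \star, u_1, \ldots, u_m\rangle) = m$.
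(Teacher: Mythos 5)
Your proof is correct and matches the paper's reasoning: the paper simply observes that $a(X)$ equals either $k+1$ or $(k+2)n+k+1$ for $X\in T_{k,n}$, which is exactly your two-case analysis on the shape of $t_0$, and the final identity comes from the already-established formula $l((B^k B)^{(k+2)n}) = k+(k+2)n+2$. No issues.
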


By \refthr{antibgeneral}, we get the desired result:
\begin{thm}
\labthr{bkbanti}
For any $k \ge 0$ and $n > 0$, $(B^k B)^{(k+2)n}$ does not have the $\rho$-property.
\end{thm}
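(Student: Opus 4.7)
The plan is to apply \refthr{antibgeneral} directly, using $X := (B^k B)^{(k+2)n}$ and $T := T_{k,n}$ as the invariant set. The two hypotheses of \refthr{antibgeneral} to verify are: (a) every iterate $X_{(i)}$ lies in $T_{k,n}$, and (b) for every $X' \in T_{k,n}$, one has $l(X) - a(X') \ge 1$. Both follow immediately from the lemmas already established.

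For (a), I would proceed by induction on $i \ge 1$. The base case $X_{(1)} = X \in T_{k,n}$ has been observed just before \reflem{closed}: the binary tree of $(B^k B)^{(k+2)n}$ is $\langle \underbrace{\star,\dots,\star}_{k+1}, \langle \star, \underbrace{\star,\dots,\star}_{(k+2)n} \rangle \rangle$, which fits the shape $\langle t_0, t_1, \dots, t_{k+1}\rangle$ with $t_0 = \langle \star, \underbrace{\star,\dots,\star}_{(k+2)n}\rangle \in T'_{k,n}$ (since every $s_i = \star$, trivially satisfying the condition on multiples of $k+2$) and $t_1 = \dots = t_{k+1} = \star \in T'_{k,n}$. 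For the inductive step, $X_{(i+1)} = X_{(i)} \cdot X$, and \reflem{closed} guarantees that applying $X$ on the right to any element of $T_{k,n}$ yields another element of $T_{k,n}$, so $X_{(i+1)} \in T_{k,n}$.

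For (b), \reflem{targs} gives $a(X') \le l(X) - 1$ for every $X' \in T_{k,n}$, which rearranges to $l(X) - a(X') \ge 1$, exactly the bound required by \refthr{antibgeneral}. Hence both hypotheses are met, and the theorem concludes that $X = (B^k B)^{(k+2)n}$ does not have the $\rho$-property.

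There is no real obstacle in the proof itself at this stage, since all the technical work has been absorbed into \refthr{antibgeneral}, \reflem{closed}, and \reflem{targs}. The only thing worth double-checking is the base case membership $X \in T_{k,n}$, specifically that the tree $\langle \star, \underbrace{\star,\dots,\star}_{(k+2)n}\rangle$ genuinely meets the two clauses defining $T'_{k,n}$; this is routine but should be stated explicitly so that the induction has a clean anchor.
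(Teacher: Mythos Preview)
Your proposal is correct and matches the paper's own proof, which simply invokes \refthr{antibgeneral} with $T=T_{k,n}$, relying on \reflem{closed} for closure and \reflem{targs} for the arity bound. One cosmetic slip: in the base case you identified $t_0$ as the non-trivial subtree and $t_1,\dots,t_{k+1}$ as leaves, but in the paper's left-nested bracket convention it is the other way around ($t_0=\dots=t_k=\star$ and $t_{k+1}=\langle\star,\underbrace{\star,\dots,\star}_{(k+2)n}\rangle$); this does not affect membership in $T_{k,n}$, so the argument stands.
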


We give more examples of $B$-terms which satisfy the condition in \refthr{antibgeneral} with some set $T$.
\begin{exa}
Consider $X = (B^2 B)^2 \circ (B B)^2 \circ B^2
= \<\star,\ \<\star,\ \<\star,\ \<\star,\ \star,\ \star>,\ \star>,\ \star>>$.
We inductively define $T'$ as follows:
\begin{enumerate}
\item $\star \in T'$
\item For any $t \in T'$, $\langle \star,\ t,\ \star \rangle \in T'$
\item For any $t_1, t_2 \in T'$, $\langle \star,\ t_1,\ \star,\ \langle \star,\ t_2,\ \star \rangle,\ \star \rangle \in T'$
\end{enumerate}
Then $T = \left\{ \<t_1,\ \<\star,\ t_2,\ \star>> ~\middle|~ t_1, t_2 \in T' \right\}$ satisfies the conditions in \refthr{antibgeneral}:
\begin{claim}
$\{X_{(k)}\mid k \ge 1\} \subset T$.
\end{claim}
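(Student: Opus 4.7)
The plan is to prove the claim by induction on $k$, in the style of \reflem{closed}. For the base case $k = 1$ we have $X_{(1)} = X$, and unfolding the definitions reads $X = \langle t_1, \langle \star, t_2, \star \rangle \rangle$ with $t_1 = \star$ and $t_2 = \langle \star, \langle \star, \star, \star \rangle, \star \rangle$; both lie in $T'$, the former by rule (1) and the latter by two applications of rule (2). Hence $X \in T$.

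For the inductive step it suffices to prove a closure lemma: if $Y \in T$ then $Y \cdot X \in T$. Write $Y = \langle t_1, \langle \star, t_2, \star \rangle \rangle$ with $t_1, t_2 \in T'$. I shall split on which of the three rules last produced $t_1$. In each case I first rewrite $Y$ in its flat head-and-arguments form $\langle \star, a_1, \ldots, a_m \rangle$, where $m = 1, 3,$ or $5$ respectively; then I compute $Y \cdot X$ by $\beta$-reduction at the tree level, substituting $X$'s body (a $\lambda$-term on eight variables whose head takes a single, deeply nested argument) into the head position. The first argument $a_1$ becomes the new head, and the remaining $a_i$ weave together with the skeleton of $X$'s body in the expected way.

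In each case the result has the shape $\langle s_1, \langle \star, s_2, \star \rangle \rangle$ and one checks $s_1, s_2 \in T'$: if $t_1 = \star$ then $s_1 = \langle \star, t_2, \star \rangle$ and $s_2 = \langle \star, \langle \star, \star, \star \rangle, \star \rangle$, both built by rule (2); if $t_1$ arises from rule (2) with witness $s$ then $s_1 = s$ is preserved while $s_2$ fits rule (3); if $t_1$ arises from rule (3) with witnesses $s'_1, s'_2$ then $s_1 = s'_1$ is preserved and $s_2$ again fits rule (3) with a deeper occurrence of $t_2$. The main obstacle I expect is the bookkeeping of this last case: $t_1$ contributes five top-level arguments to the flat form of $Y$, and tracing each through the three nesting levels of $X$'s body without slipping is where the care is required. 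The computation is otherwise mechanical and closely mirrors the treatment of \reflem{closed}.
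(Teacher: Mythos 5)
Your proposal is correct and follows essentially the same route as the paper: induction on $k$ via the base case $X\in T$ plus closure of $T$ under right application of $X$, with the case split on which of the three rules of $T'$ last produced $t_1$, and the resulting decompositions $\langle s_1,\langle\star,s_2,\star\rangle\rangle$ you describe in each case agree with the trees the paper computes.
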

\begin{proof}
By definition, $X \in T$.
Let $X' = \<t_1,\ \<\star,\ t_2,\ \star>>\in T$.
Then, we have
\[
X'X =
\begin{cases}
	\<\star,\ t_2,\ \star, \ \<\star,\ \<\star,\ \<\star, \ \star, \ \star>,\ \star>,\ \star>> & \text{if $t_1 = \star$}\\
	\<t_1',\  \<\star,\ \<\star,\ t_2,\ \star,\ \<\star,\ \star,\ \star>,\ \star>,\ \star>> & \text{if $t_1 = \<\star,\ t_1',\ \star>$}\\
	\<t_{11},\ \<\star,\ \<\star,\ t_{12},\ \star,\ \<\star,\ \<\star,\ t_2,\ \star>,\ \star>,\ \star>,\ \star>> & \text{if $t_1 = \<\star,\ t_{11},\ \star,\ \<\star,\ t_{12},\ \star>,\ \star>$},
\end{cases}
\]
and, in either case, $X'X \in T$.
\end{proof}
\begin{claim}
$l(X)-a(X')\ge 1$ for any $X' \in T$.
\end{claim}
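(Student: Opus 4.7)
The plan is a short case analysis on the root structure of $t_1$ using the three inductive clauses defining $T'$. First I would read off $l(X) = 8$ by counting the leaves $\star$ in the given tree $X = \<\star,\ \<\star,\ \<\star,\ \<\star,\ \star,\ \star>,\ \star>,\ \star>>$, which reduces the claim to showing $a(X') \le 7$ for every $X' \in T$.

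Every $X' \in T$ has the form $\<t_1,\ R>$ with $t_1 \in T'$ and $R = \<\star,\ t_2,\ \star>$. The quantity $a(X')$ is the number of right-children obtained when the root of $X'$ is left-flattened into the canonical form $\<\star,\ s_1,\ \dots,\ s_k>$, and I would compute it directly from the shape of $t_1$: when $t_1 \neq \star$ this amounts to appending $R$ to $t_1$'s own left-flattened child list, and when $t_1 = \star$ one simply reads $X' = \<\star,\ R>$ off directly. Applying this to the three clauses for $t_1$: if $t_1 = \star$ then $X' = \<\star,\ R>$ and $a(X') = 1$; if $t_1 = \<\star,\ t,\ \star>$ then $X' = \<\star,\ t,\ \star,\ R>$ and $a(X') = 3$; and if $t_1 = \<\star,\ t_1',\ \star,\ \<\star,\ t_2',\ \star>,\ \star>$ then $X'$ left-flattens to $\<\star,\ t_1',\ \star,\ \<\star,\ t_2',\ \star>,\ \star,\ R>$ and $a(X') = 5$. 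Hence $a(X') \le 5$ in every case, and $l(X) - a(X') \ge 3 \ge 1$.

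The argument is pure case analysis, so I do not foresee any substantial obstacle. The one routine fact to justify is that appending $R$ as the final right sibling of the spine of $t_1$ increases $a$ by exactly one, which is immediate because every clause of $T'$ places a $\star$ at the leftmost leaf position of its output. The essential content of the claim is really just that the inductive definition of $T'$ bounds the spine length at the root of its elements by $4$, which is exactly what produces the needed gap $l(X) - a(X') \ge 3$.
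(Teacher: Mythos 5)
Your proof is correct and follows essentially the same route as the paper's: the paper simply asserts that $a(X')\in\{1,3,5\}$ and $l(X)=8$, while you supply the (routine) case analysis on the three clauses of $T'$ that justifies those values. No issues.
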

\begin{proof}
  Since $a(X')$ is equal to either 1, 3, or 5, and $l(X) = 8$, $l(X) - a(X') \ge 3$.
\end{proof}
Thus, $(B^2 B)^2\circ (B B)^2 \circ B^2$ does not have the $\rho$-property.
\end{exa}
\begin{exa}
Consider $X = (BB)^3\circ B^3 = \<\star,\ \<\star,\ \star,\ \star,\ \star>,\ \star,\ \star>$.
We inductively define $T'$ as follows:
\begin{enumerate}
\item $\star \in T'$
\item For any $t \in T'$, $\<\star,\ t,\ \star,\ \star>\in T'$
\end{enumerate}
Then $T = \{\<t_1,\<\star,t_2,\star,\star>> \mid t_1,t_2 \in T'\}$ satisfies the conditions in \refthr{antibgeneral}:
\begin{claim}
  $\{X_{(k)}\mid k \ge 1\} \subset T$.
\end{claim}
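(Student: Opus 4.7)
\textit{Proof proposal.}
My plan is to follow the template of the preceding claim, arguing by induction on $k \ge 1$. For the base case $X_{(1)} = X$, the tree decomposes as $\langle t_1, \langle \star, t_2, \star, \star \rangle \rangle$ with $t_1 = \star$ and $t_2 = \langle \star, \star, \star, \star \rangle$; both belong to $T'$ (the former by rule~1, the latter by rule~2 applied once to $\star$), so $X \in T$.

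For the inductive step it suffices to prove that $T$ is closed under right-application of $X$. Fix an arbitrary $X' = \langle t_1, \langle \star, t_2, \star, \star \rangle \rangle \in T$ with $t_1, t_2 \in T'$. By the inductive definition of $T'$, $t_1$ is either $\star$ (rule~1) or of the form $\langle \star, t_1', \star, \star \rangle$ for some $t_1' \in T'$ (rule~2). I would handle these two cases separately, reading off the $\beta\eta$-normal form of $X'X$ with \reflem{bform}.

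When $t_1 = \star$, $X'$ has $a(X') = 1$, and substituting $X$ for its single head variable yields
\[
X'X = \langle \star,\ t_2,\ \star,\ \star,\ \langle \star,\ \langle \star,\ \star,\ \star,\ \star \rangle,\ \star,\ \star \rangle \rangle,
\]
which splits at the outermost application as $\langle T_1, \langle \star, T_2, \star, \star \rangle \rangle$ with $T_1 = \langle \star, t_2, \star, \star \rangle$ and $T_2 = \langle \star, \star, \star, \star \rangle$, both in $T'$ by rule~2. When instead $t_1 = \langle \star, t_1', \star, \star \rangle$, $X'$ has $a(X') = 4$ and the four arguments $t_1', \star, \star, \langle \star, t_2, \star, \star \rangle$ feed the first four positions of $X$'s body $x_1(x_2(x_3 x_4 x_5 x_6) x_7 x_8)$, leaving $x_5, x_6, x_7, x_8$ as fresh bound variables; the resulting normal form is
\[
X'X = \langle t_1',\ \langle \star,\ \langle \star,\ \langle \star,\ t_2,\ \star,\ \star \rangle,\ \star,\ \star \rangle,\ \star,\ \star \rangle \rangle,
\]
again of the required form, with $T_1 = t_1' \in T'$ and $T_2 = \langle \star, \langle \star, t_2, \star, \star \rangle, \star, \star \rangle \in T'$ (two applications of rule~2).

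The main obstacle I foresee is the careful bookkeeping of the tree shape in the second case: the nested subtree of $X$ has to interact with the substituted arguments so that $\langle \star, t_2, \star, \star \rangle$ lands precisely at the inner witness position of $T_2$. Once the trees are laid out, membership in $T'$ is immediate from the inductive definition.
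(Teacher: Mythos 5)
Your proof is correct and follows essentially the same route as the paper: verify $X\in T$ directly, then show $T$ is closed under right application of $X$ by splitting on whether $t_1=\star$ or $t_1=\<\star,\ t_1',\ \star,\ \star>$, and your computed normal forms of $X'X$ in both cases agree with the paper's. (You also implicitly correct a bracketing typo in the paper's displayed tree for $X$; the decomposition $\<\star,\ \<\star,\ \<\star,\star,\star,\star>,\ \star,\ \star>>$ you use is the right one.)
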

\begin{proof}
  By definition, $X \in T$.
  Let $X' = \<t_1,\ \<\star,\ t_2,\ \star,\ \star>> \in T$. Then, we have
  \[
    X'X =
    \begin{cases}
      \<\star,\ t_2,\ \star,\ \star,\ \<\star,\ \<\star,\ \star,\ \star,\ \star>,\ \star,\ \star>> & \text{if $t_1 = \star$}\\
      \<t_1',\ \<\star,\ \<\star,\ \<\star,\ t_2,\ \star,\ \star>,\ \star,\ \star>,\ \star,\ \star>> & \text{if $t_1 = \<\star,\ t_1\,\ \star,\ \star>$}
    \end{cases}
  \]
  and, in either case, $X'X \in T$.
\end{proof}
\begin{claim}
  $l(X) - a(X') \ge 1$ for any $X' \in T$.
\end{claim}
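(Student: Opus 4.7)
The plan is to follow the template used for the preceding example: read off $l(X)$ from the explicit tree of $X$, then enumerate the handful of shapes that elements of $T$ can take and compute the corresponding values of $a(X')$. The first step is routine; from the given $X = \<\star, \<\star, \star, \star, \star>, \star, \star>$ one simply counts $l(X) = 7$ leaves.

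The substantive step is to inventory the possible arities $a(X')$ for $X' \in T$. By definition, every such $X'$ has the form $\<t_1, \<\star, t_2, \star, \star>>$ with $t_1, t_2 \in T'$, so the outer arity $a(X')$ depends only on the shape of $t_1$. The inductive definition of $T'$ offers just two clauses, so $t_1$ is either the leaf $\star$ or a tree with outer form $\<\star, t, \star, \star>$. In the multi-child notation these two cases give $X' = \<\star, \<\star, t_2, \star, \star>>$ with $a(X') = 1$, and $X' = \<\star, t, \star, \star, \<\star, t_2, \star, \star>>$ with $a(X') = 4$, respectively. Because clause 2 of $T'$ preserves the outer shape $\<\star, -, \star, \star>$, any further nesting inside $t_1$ leaves its top-level arity untouched, so no further cases arise.

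Combining the two computations yields $a(X') \in \{1, 4\}$ against $l(X) = 7$, whence $l(X) - a(X') \ge 3 \ge 1$ in both cases. I do not expect any genuine obstacle: the proof is a two-line case analysis mirroring the corresponding claim in the previous example. The only small pitfall is to remember that when $t_1$ is non-leaf, the three children along its top-level spine combine with the single right subtree of $X'$ to produce arity $4$ rather than $3$.
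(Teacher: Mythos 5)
Your proof is correct and follows exactly the paper's argument: enumerate the two possible shapes of $t_1$ in $X'=\<t_1,\<\star,t_2,\star,\star>>$, read off $a(X')\in\{1,4\}$ (noting that clause 2 of $T'$ always produces the outer shape $\<\star,-,\star,\star>$, so deeper nesting cannot change the top-level arity), and compare with $l(X)$. The only discrepancy is the value of $l(X)$: you count $7$ from the tree as displayed, while the paper uses $l(X)=8$. The paper's displayed tree is in fact missing its outer wrapper --- the binary tree of $(BB)^3\circ B^3$ is $\<\star,\<\star,\<\star,\star,\star,\star>,\star,\star>>$ with $8$ leaves (the $7$-leaf tree as printed is not even of the form $\<t_1,\<\star,t_2,\star,\star>>$, so it could not satisfy the preceding claim that $X\in T$). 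Either value gives $l(X)-a(X')\ge 1$, so the conclusion is unaffected; just be aware that the correct leaf count is $8$ and the bound is actually $\ge 4$, as the paper states.
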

\begin{proof}
  $a(X')$ equals 1 or 4 and $l(X) = 8$, so $l(X)-a(X')\ge 4$.
\end{proof}
Thus, $(BB)^3\circ B^3$ does not have the $\rho$-property.
\end{exa}

\refthr{antibgeneral} gives a possible technique to prove that $l(X_{(i)})$ diverges, or, the anti-$\rho$-property of $X$, for some $B$-term $X$.
Since the hypotheses of \refthr{antibgeneral} implies that $l(X_{(i)})$ is also monotonically non-decreasing,
we can consider another problem on $B$-terms: ``Give a necessary and sufficient condition for $l(X_{(i)})$ to be monotonically non-decreasing for a $B$-term $X$.''



\section{Possible approaches}
\labsec{possapp}
The present paper introduces
a canonical representation
to make equivalence check of $B$-terms easier.
The idea of the representation is based on that
we can lift all $\circ$'s (2-argument $B$)
to the outside of $B$ (1-argument $B$)
by equation~\refeqn{Bo-distr}.
One may consider it the other way around.
Using the equation,
we can lift all $B$'s (1-argument $B$)
to the outside of $\circ$ (2-argument $B$).
Then one of the arguments of $\circ$ becomes $B$.
By equation~\refeqn{Bo-push},
we can move all $B$'s right.
Thereby we find another canonical representation for $B$-terms
given by
\begin{align*}
e~ &\dcolonequals~ B ~\mid~ B~ e ~\mid~ e \circ B.
\end{align*}
We can show the uniqueness of this representation by giving a bijective transformation $f$ from it to the polynomial representation.
We define $f$ inductively by
\begin{align*}
    f(B) &= B^0B\\
    f(B~e) &= (B^{n_1+1}B)\circ\dots\circ(B^{n_k+1}B) \quad \text{if $f(e)=(B^{n_1}B)\circ\dots\circ(B^{n_k}B)$}\\
    f(e\circ B) &= f(e)\circ (B^0B).
\end{align*}
Note that $B^0B = B$ and the second rule of $f$ does not change the equivalence class of $B$-terms because $B(e_1\circ\dots\circ e_k)=(B~e_1)\circ\dots\circ(B~e_k)$ (Equation \refeqn{Bo-distr}).
We can see the inverse of this function is given by
\begin{align*}
    f^{-1}(B^0B) &= B\\
    f^{-1}((B^{n_1}B)\circ\dots\circ(B^{n_k}B)) &=
    B~(f^{-1}((B^{n_1-1}B)\circ\dots\circ(B^{n_k-1}B))) \quad (n_k>0)\\
    f^{-1}((B^{n_1}B)\circ\dots\circ(B^{n_k}B)\circ (B^0B)) &=
    (f^{-1}((B^{n_1}B)\circ\dots\circ(B^{n_k}B)))\circ B.
\end{align*}

Function application (written as $@$, explicitly) over this canonical representation
can be recursively defined by
\[
\begin{array}{r@{$\;@\;$}l@{$\;=\;$}l}
 B                & e            & B~ e \\
(e_1 \circ B)     & e_2            & e_1 ~@~ (B~ e_2)\\
(B~ e)            & B            & e \circ B \\
(B~ e_1)          & (e_2\circ B) & ((B~ e_1) ~@~ e_2) \circ B \\ 
(B~ B)            & (B~ e)       & (B~ (B~ e)) \circ B \\ 
(B~ (e_1\circ B)) & (B~ e_2)     & ((B~ e_1) ~@~ (B~ (B~ e_2))) \circ B \\
(B~ (B~ e_1))     & (B~ e_2)     & B~ ((B~ e_1) ~@~ e_2) 
\text.
\end{array}
\]
Notice that the pattern matching is exhaustive.
The correctness of the equations is proved by equations~(B2') and (B3').
Termination of the recursive definition is shown 
by a simple lexicographical order of the first and the second operand of application.
Note that this canonical form can be represented by a sequence of $(B~ \square)$ and $(\square \circ B)$
where $\square$ stands for a hole.
Also, a sequence of them exactly corresponds to a single term in canonical form
by hole application.
\eg, $[(B~ \square),~ (B~ \square),~ (\square\circ B)]$ 
represents $B~ (B~ (B \circ B))$ where a nullary constructor $B$ is filled in the last element $(\square\circ B)$.
This fact may be used to find the $\rho$- or anti-$\rho$-properties.
By writing 0 and 1 for $(B~ \square)$ and $(\square\circ B)$,
the above equation can be rewritten as follows:
\begin{align*}
\varepsilon @ y &= 0y\\
1x @ y &= x@0y\\
0x @ \varepsilon &= 1x\\
0x @ 1y &= 1 (0x@y)\\
0\varepsilon @ 0y &= 100y\\
01x @ 0y &= 1 (0x@00y)\\
00x @ 0y &= 0 (0x@y)
\end{align*}
where $\varepsilon$ is used for the end marker (filling $B$ at the end).
A monomial $B$-term corresponds to a binary sequence that does not contain $1$.
If $x@y$ is always greater than $x$ in some measure when $y$ contains $1$,
we can claim the ``only-if'' part of \refcnj{B-conj}.

Waldmann~\cite{Waldmann13email} suggests
that the $\rho$-property of $B^{n} B$ may be checked
even without converting $B$-terms into canonical forms.
He simply defines $B$-terms by
\begin{align*}
  e &\dcolonequals B^k \mid e~ e
\end{align*}
and regards $B^k$ as a constant
which has a rewrite rule
$B^k~e_1~e_2~\dots~e_{k+2}\to e_1~(e_2~\dots~e_{k+2})$.
He implemented a check program in Haskell
to confirm the $\rho$-property.
Even in the restriction on rewriting,
he found that
$\sapp{(B^0 B)}{9}=\sapp{(B^0 B)}{13}$,
$\sapp{(B^1 B)}{36}=\sapp{(B^1 B)}{56}$,
$\sapp{(B^2 B)}{274}=\sapp{(B^2 B)}{310}$ and
$\sapp{(B^3 B)}{4267}=\sapp{(B^3 B)}{10063}$,
in which it requires a few more right applications
to find the $\rho$-property than the case of canonical representation.
If the $\rho$-property of $B^n B$ for any $n\geq0$ is shown
under the restricted equivalence given by the rewrite rule,
then we can conclude the ``if'' part of \refcnj{B-conj}.



\section{Concluding remark}
\labsec{concl}
We have investigated the $\rho$-properties of $B$-terms in particular forms
so far.
\begin{table}
\caption{Summary of known results on the $\rho$-property of $B$-terms}
\labtbl{known}
\begin{tabular}{@{}l@{\hspace{4em}}l@{}}
\medskip\\[-4ex]\toprule\noalign{\medskip}
having $\rho$-property &
$B^n B$ with $0\leq n\leq 6$
\medskip\\\midrule\noalign{\medskip}
having anti-$\rho$-property &
$(B^k B)^{(k+2)n}$ with $k\geq0, n>0$
\medskip\\&$(B^2 B)^2\circ(B B)^2\circ B^2$
\medskip\\&$(B B)^3\circ B^3$
\medskip\\\bottomrule
\end{tabular}
\end{table}
\Reftbl{known} summarizes all results we obtained.
While the $B$-terms equivalent to $B^n B$ with $n\leq 6$ have the $\rho$-property,
the $B$-terms $(B^k B)^{(k+2)n}$ with $k\geq0$ and $n>0$,
$(B^2 B)^2\circ(B B)^2\circ B^2$, and
$(BB)^3\circ B^3$
do not.
We have also introduced a canonical representation of \(B\)-terms
which is useful to prove or disprove of specific \(B\)-terms.

We introduce remaining problems related to these results.
The $\rho$-property is defined for any combinatory terms
(and closed $\lambda$-terms).
We investigated it mainly for $B$-terms 
as a simple but interesting instance
to give a partial solution of \refcnj{B-conj}
in the present paper.
The conjecture implies that the $\rho$-property of $B$-terms is decidable.
%
One could consider the decidability of the $\rho$-property for
$BCK$ and $BCI$-terms
which is still open.
Also, the decidability for the $\rho$-property of $S$-terms and $L$-terms can
be considered.
Waldmann's work on a rational representation of normalizable $S$-terms~\cite{Waldmann00ic} may
be helpful to solve it.
We expect that none of the $S$-terms have the $\rho$-property
as $S$ itself does not, though.
Regarding $L$-terms,
%
Statman's work~\cite{Statman89sc} may be helpful
where
equivalence of $L$-terms is shown decidable
up to a congruence relation induced by $L~e_1~e_2\to e_1~(e_2~e_2)$.
It would be interesting to investigate
the $\rho$-property of $L$-terms in this setting.




%
%

\section*{Acknowledgment}
We are grateful to Johannes Waldmann and Sebastian Maneth for their fruitful comments on this work at an earlier stage. This work was
partially supported by JSPS KAKENHI Grant Number JP17K00007.

\bibliographystyle{alpha}
\bibliography{main}


\end{document}